\DeclareMathAlphabet{\mathbit}{OML}{cmr}{bx}{it}
\DeclareMathAlphabet{\mathsf}{OT1}{cmss}{m}{n}
\DeclareMathAlphabet{\mathTXf}{OT1}{cmss}{bx}{it}
\DeclareMathOperator{\ZF}{ZF}
\DeclareMathOperator{\DoF}{DoF}
\DeclareMathOperator{\CN}{\mathcal{N}_{\mathbb{C}}}
\DeclareMathOperator{\QMAT}{\mathrm{Q-MAT}}
\newcommand{\bU}{\mathbf{U}}
\newcommand{\bV}{\mathbf{V}}
\newcommand{\bh}{\bm{h}}
\newcommand{\bbm}{\bm{m}}
\newcommand{\bu}{\bm{u}}
\newcommand{\bv}{\bm{v}}
\newcommand{\LB}{\left(}
\newcommand{\RB}{\right)}
\newcommand{\LSB}{\left[}
\newcommand{\RSB}{\right]}
\newcommand*{\dotleq}{\mathrel{\dot{\leq}}}
\newcommand{\E}{{\mathbb{E}}}
\newcommand{\He}{{{\mathrm{H}}}}
\newcommand{\xv}{\mathbf{x}}
\theoremstyle{remark}
\newtheorem{remark}{Remark} %[section]
\theoremstyle{assumption}
\newtheorem{theorem}{Theorem}
\newtheorem{lemma}{Lemma}
\newtheorem{example}{Example}
\begin{document}
\title{Optimal DoF of the K-User Broadcast Channel with Delayed and Imperfect Current CSIT}
%\title{Optimally Exploiting Delayed CSIT and Imperfect CSIT in the K-user MISO BC}
%\title{Optimally Bridging the Gap from Delayed to Perfect CSIT in the K-user MISO BC}
%\title{Optimally Combining of MAT and ZF in the K-user MISO BC with Delayed CSIT}

\author{\IEEEauthorblockN{Paul de Kerret, David Gesbert, Jingjing Zhang, and Petros Elia}

\thanks{Preliminary results will be presented at the IEEE Information Theory Workshop (ITW), 2016.}
\thanks{The authors are with the Communication Systems Department, EURECOM, Sophia Antipolis, France (email: \{dekerret,gesbert,zhang,elia\}@eurecom.fr).}
\thanks{D. Gesbert and P. de Kerret are supported by the European Research Council under the European Union's Horizon 2020 research and innovation program (Agreement no. 670896). P. Elia is supported by the ANR project ECOLOGICAL-BITS-AND-FLOPS.
}}

\author{\IEEEauthorblockN{Paul de Kerret, David Gesbert, Jingjing Zhang, and Petros Elia}\\
Communication Systems Department,\\ EURECOM}

\maketitle
\begin{abstract}
This work\footnote{D. Gesbert and P. de Kerret are supported by the European Research Council under the European Union's Horizon 2020 research and innovation program (Agreement no. 670896). P. Elia is supported by the ANR project ECOLOGICAL-BITS-AND-FLOPS.\\Parts of these results have been published in IEEE Information Theory Workshop 2016, Cambridge.} studies the optimal Degrees-of-Freedom (DoF) of the $K$-User MISO Broadcast Channel (BC) with delayed Channel-State Information at the Transmitter (CSIT) and with additional current noisy CSIT where the current channel estimation error scales in~$P^{-\alpha}$ for $\alpha\in[0,1]$. This papers establishes for the first time the optimal DoF in this setting thanks to a new transmission scheme which achieves the elusive DoF-optimal combining of the Maddah-Ali and Tse scheme (MAT) introduced in their seminal work in $2010$ with Zero-Forcing (ZF) for an arbitrary number of users. The derived sum DoF takes the surprisingly simple form $(1-\alpha) K/H_K+\alpha K$ where $H_K\triangleq \sum_{k=1}^K \frac{1}{k}$ is the sum-DoF achieved using solely MAT.
\end{abstract}

%%%%%%%%%%%%%%%%%%%%%%%%%%%%%%%%%%%%%%%%%%%%%%%%%%%%%%%%%%%%%%%%%
%%%%%%%%%%%%%%%%%%%%%%%%%%%%%%%%%%%%%%%%%%%%%%%%%%%%%%%%%%%%%%%%%
%%%%%%%%%%%%%%%%%%%%%%%%%%%%%%%%%%%%%%%%%%%%%%%%%%%%%%%%%%%%%%%%%
%%%%%%%%%%%%%%%%%%%%%%%%%%%%%%%%%%%%%%%%%%%%%%%%%%%%%%%%%%%%%%%%%
\section{Introduction}\label{se:SM}
%%%%%%%%%%%%%%%%%%%%%%%%%%%%%%%%%%%%%%%%%%%%%%%%%%%%%%%%%%%%%%%%%
%%%%%%%%%%%%%%%%%%%%%%%%%%%%%%%%%%%%%%%%%%%%%%%%%%%%%%%%%%%%%%%%%
%%%%%%%%%%%%%%%%%%%%%%%%%%%%%%%%%%%%%%%%%%%%%%%%%%%%%%%%%%%%%%%%%
%%%%%%%%%%%%%%%%%%%%%%%%%%%%%%%%%%%%%%%%%%%%%%%%%%%%%%%%%%%%%%%%%

In the $K$-user wireless BC, feedback accuracy and timeliness crucially affects performance, but are also notoriously difficult to obtain. In terms of accuracy, it is well known that increasing feedback quality can elevate performance, from that of TDMA (sum DoF of $1$), to the maximum possible interference-free performance with a sum DoF of $K$. As the recent result in \cite{Davoodi2016} tells us, having imperfect instantaneous CSIT with an estimation error that scales (in the high-power $P$ setting) as~$P^{-\alpha}$ ($\alpha\in[0,1]$), can allow, using basic ZF precoding techniques and rate splitting, for an optimal sum-DoF of $1+(K-1)\alpha$.

On the other hand, when perfect-accuracy CSIT is obtained in a delayed manner --- in the sense that the CSI is fed back with a delay exceeding the channel coherence period --- then, using more involved, retrospective, \emph{MAT}-type space-time alignment \cite{MaddahAli2012}, one can surprisingly get substantial DoF gains, reaching a sum-DoF of $K/H_K$ with $H_K\triangleq \sum_{k=1}^K \frac{1}{k}$, which scales with $K$ approximately as $K/\text{ln}(K)$.

This interplay between performance and feedback timeliness-and-quality, has sparked a plethora of works that considered a variety of feedback mechanisms with delayed and imperfect CSIT. Such works can be found in \cite{Vaze2012a,Abdoli2011,Lee2012}, and in \cite{Tandon2012b} which --- for the two-user MISO BC setting --- studied the case where the CSIT can alternate between perfect, delayed (completely outdated), and non-existent (see also~\cite{Rassouli2015}). The performance at finite SNR of the MAT scheme were also discussed in \cite{Yi2013a}, while the gap to optimality was bounded in \cite{Vahid2016}.

An interesting approach came with the work in \cite{Kobayashi2012} which introduced a feedback scenario that offered a combination of imperfect-quality current (instantaneously available) CSIT, together with additional (perfect-accuracy) delayed CSIT.
In this same setting --- which reflected different applications, including that of using predictions to get an estimate of the current state of a time-correlated channel --- the channel estimation error of the current channel state was assumed to scale in power as $P^{-\alpha}$, for some CSIT quality exponent $\alpha\geq 0$. \cite{Kobayashi2012} also introduced ingredients that proved to be key in this setting: The use of an initial imperfect ZF precoding layer, followed by retransmission of a \emph{quantized} form of the interference generated by CSI imperfections in the first layer precoder.

Additional work --- within the context of the BC --- came in \cite{Yang2013,Gou2012} which established the maximal DoF in a two-user MISO BC scenario, as well as in \cite{Chen2012a,Chen2013b} which considered the case of imperfect-quality delayed CSIT. More results can also be found in \cite{Chen2013a,Chen2013b} which considered the broad setting of any-time any-quality feedback, and in \cite{Yi2014,Chen2014} which studied the two-user MIMO BC (and IC); all for the two-user case.
This general challenge of dealing with imperfect feedback has also sparked very recent interest, with different publications that include \cite{Torrellas2015,Rassouli2016,Joudeh2016,Hao2016,Bracher2015,Lashgari2015}. Finally, interesting connections between the delayed CSIT configuration and the generalized feedback setting \cite{Shayevitz2013} has been put forward in \cite{Kim2015}.

%%%%%%%%%%%%%%%%%%%%%%%%%%%%%%%%%%%%%%%%%%%%%%%%%%%%%%%%%%%%%%%%%
%%%%%%%%%%%%%%%%%%%%%%%%%%%%%%%%%%%%%%%%%%%%%%%%%%%%%%%%%%%%%%%%%
\subsection{Simultaneous scaling of MAT and ZF gains}
%%%%%%%%%%%%%%%%%%%%%%%%%%%%%%%%%%%%%%%%%%%%%%%%%%%%%%%%%%%%%%%%%
%%%%%%%%%%%%%%%%%%%%%%%%%%%%%%%%%%%%%%%%%%%%%%%%%%%%%%%%%%%%%%%%%

For the more general case of the $K$-user BC, again with joint delayed and imperfect-current CSIT, very little is known. For the particular case considered here, a general outer bound was provided in \cite{dekerret2013_ISIT}, and efforts to reach this bound can be found in~\cite{Luo2014}. The main goal has remained to secure simultaneous scaling of MAT-type gains (that exploit delayed CSIT), and ZF gains (that exploit imperfect-quality current CSI). To date, this has remained an elusive open problem, and any instance of providing such simultaneous gains was either limited to the 2-user case, or --- as in the case of the scheme in \cite{dekerret2013_ISIT} --- resulted in MAT-type DoF gains that saturated at $2$. This elusive open problem is resolved here, by inventing a new scheme, referred to as the $\QMAT$ scheme, that combines different new ingredients that jointly allow for MAT and ZF components to optimally coexist. Combined with the outer bound in \cite{dekerret2013_ISIT}, the achieved DoF establishes the optimal sum-DoF, which is here shown to be equal to $\alpha K+ (1-\alpha) K/(\sum_{k=1}^K \frac{1}{k})$.

%%%%%%%%%%%%%%%%%%%%%%%%%%%%%%%%%%%%%%%%%%%%%%%%%%%%%%%%%%%%%%%%%
%%%%%%%%%%%%%%%%%%%%%%%%%%%%%%%%%%%%%%%%%%%%%%%%%%%%%%%%%%%%%%%%%
\subsection{Notation}
%%%%%%%%%%%%%%%%%%%%%%%%%%%%%%%%%%%%%%%%%%%%%%%%%%%%%%%%%%%%%%%%%
%%%%%%%%%%%%%%%%%%%%%%%%%%%%%%%%%%%%%%%%%%%%%%%%%%%%%%%%%%%%%%%%%
For $\mathcal{C}(P)$ denoting the sum capacity\cite{Cover2006} of the MISO BC considered, we will place emphasis on the high-SNR degree of freedom approximation %denoted here by~$\DoF^{\star}(\alpha)$ and defined by
\begin{equation}
\DoF^{\star}\triangleq \lim_{P\rightarrow \infty}\frac{\mathcal{C}(P)}{\log_2(P)}.
\label{eq:SM_4}
\end{equation}
We will use the notation $H_K \triangleq \sum_{i=1}^{K} \frac{1}{i}$ to represent the $K$-th harmonic number. $\mathbb{Z}$ will represent the integers, $\mathbb{Z}^{+}$ the positive integers, $\mathbb{R}^{+}$ the positive real numbers, $\mathbb{C}$ the complex numbers, $\binom{n}{k}$ the $n$-choose-$k$ operator, and $\oplus$ the bitwise XOR operation. We will use $\mathcal{K} \triangleq \{1,2,\cdots,K\}$. If $\mathcal{S}\subset \mathcal{K}$ is a set, then $\bar{\mathcal{S}}$ will denote $\mathcal{K} \backslash \mathcal{S}$, and $|\mathcal{S}|$ will denote its cardinality. Complex vectors will be denoted by lower-case bold font. For any vector $\xv$, we will use $\{\xv\}_i, \|\xv\|^2$ and $\xv^{\He}$ to respectively denote the $i$th element of the vector, its magnitude-squared, and its conjugate transpose. We will also use $\doteq$ to denote \emph{exponential equality}, i.e., we write $f(P)\doteq P^{B}$ to denote $\displaystyle\lim_{P\to\infty}\frac{\log(f(P))}{\log (P)}=B$. We write $\CN(0,\sigma^2)$ to denote the complex Gaussian distribution of zero mean and variance $\sigma^2$.

%%%%%%%%%%%%%%%%%%%%%%%%%%%%%%%%%%%%%%%%%%%%%%%%%%%%%
%%%%%%%%%%%%%%%%%%%%%%%%%%%%%%%%%%%%%%%%%%%%%%%%%%%%%
%%%%%%%%%%%%%%%%%%%%%%%%%%%%%%%%%%%%%%%%%%%%%%%%%%%%%
%%%%%%%%%%%%%%%%%%%%%%%%%%%%%%%%%%%%%%%%%%%%%%%%%%%%%
\subsection{System Model}\label{se:SM}
%%%%%%%%%%%%%%%%%%%%%%%%%%%%%%%%%%%%%%%%%%%%%%%%%%%%%
%%%%%%%%%%%%%%%%%%%%%%%%%%%%%%%%%%%%%%%%%%%%%%%%%%%%%
%%%%%%%%%%%%%%%%%%%%%%%%%%%%%%%%%%%%%%%%%%%%%%%%%%%%%
%%%%%%%%%%%%%%%%%%%%%%%%%%%%%%%%%%%%%%%%%%%%%%%%%%%%%

%%%%%%%%%%%%%%%%%%%%%%%%%%%%%%%%%%%%%%%%%%%%%%%%%%%%%
%%%%%%%%%%%%%%%%%%%%%%%%%%%%%%%%%%%%%%%%%%%%%%%%%%%%%
\subsubsection{$K$-User MISO Broadcast Channel}
%%%%%%%%%%%%%%%%%%%%%%%%%%%%%%%%%%%%%%%%%%%%%%%%%%%%%
%%%%%%%%%%%%%%%%%%%%%%%%%%%%%%%%%%%%%%%%%%%%%%%%%%%%%

This work considers the $K$-User MISO BC with fading, where the Transmitter (TX) --- which is equipped with $M$~antennas ($M\geq K$) --- serves $K$~single-antenna Receivers (RXs). At any time~$t$, the signal received at RX~$k\in\mathcal{K}$, can be written as
\begin{equation}
y_k[t]=\bm{h}_k^{\He}[t]\xv[t]+n_k[t]
\label{eq:SM_1}
\end{equation}
where $\bm{h}_k^{\He}[t]\in \mathbb{C}^{1\times M}$ represents the channel to user~$k$ at time~$t$, where~$\xv[t]\in \mathbb{C}^{M}$ represents the transmitted signal, and where $n_k[t]\in \mathbb{C}$ represents the additive noise at RX~$k$, where this noise is distributed as~$\CN(0,1)$, independently of the channel and of the transmitted signal. Furthermore, the transmitted signal~$\xv[t]$ fulfills the average asymptotic power constraint~$\E[\|\xv[t]\|^2]\doteq P$.
%We define further the channel matrix~$\mathbf{H}[t]\triangleq [\bm{h}_1[t],\ldots,\bm{h}_K[t]]^{\He}\in \mathbb{C}^{K\times M}$ and introduce the notation~$\mathcal{H}^{t}\triangleq \{\mathbf{H}[\ell]\}_{\ell=1}^{t}$.
%%%% and the channel matrix without the channel to user~$i$ which we denote by~$\bar{\mathbf{H}}_i\triangleq [\bm{h}_1,\ldots,\bm{h}_{i-1},\bm{h}_{i+1},\ldots,\bm{h}_K]^{\He}\in \mathbb{C}^{(K-1)\times M}$.
The channel is assumed to be drawn from a continuous ergodic distribution such that all the channel matrices and all their sub-matrices are almost surely full rank.

%%%%%%%%%%%%%%%%%%%%%%%%%%%%%%%%%%%%%%%%%%%%%%%%%%%%%
%%%%%%%%%%%%%%%%%%%%%%%%%%%%%%%%%%%%%%%%%%%%%%%%%%%%%
\subsubsection{Perfect delayed CSIT and imperfect current CSIT}
%%%%%%%%%%%%%%%%%%%%%%%%%%%%%%%%%%%%%%%%%%%%%%%%%%%%%
%%%%%%%%%%%%%%%%%%%%%%%%%%%%%%%%%%%%%%%%%%%%%%%%%%%%%

Our CSIT model builds on the delayed CSIT model introduced in~\cite{MaddahAli2012} and generalized in \cite{Kobayashi2012,Yang2013,Gou2012} to account for the availability of an imperfect estimate of the current channel state. For ease of exposition, we will here adopt the fast-fading channel model, and will assume that at any time~$t$, the TX has access to the delayed CSI (with perfect accuracy) of all previous channel realizations up to time~$t-1$, as well as an imperfect estimate of the current channel state. Each current estimate $\hat{\bm{h}}_k^{\He}[t]$ for each channel $\bm{h}_k^{\He}[t]$, comes with an estimation error
\begin{equation}
\tilde{\bm{h}}_k^{\He}[t] = \bm{h}_k^{\He}[t] - \hat{\bm{h}}_k^{\He}[t]
\label{eq:SM_2}
\end{equation}
whose entries are i.i.d. $\CN(0,P^{-\alpha})$ with power $P^{-\alpha}$ for some parameter $\alpha\in [0,1]$, which we refer to as the \emph{CSIT quality exponent}, and which is used to parameterize the accuracy of the current CSIT\footnote{Note that from a DoF perspective, we can restrict ourselves to $\alpha \in [0,1]$, since an estimation/quantization error with power scaling as~$P^{-1}$ ($\alpha = 1$), is essentially perfect. Similarly an estimation error with power scaling as~$P^{0}$ ($\alpha = 0$), offers no DoF gains over the case of having no CSIT (cf.~\cite{Davoodi2016}, see also~\cite{dekerret2013_ISIT}).}. All estimates are assumed to be independent of all estimation errors. Finally we make the common assumption that the channel~$\bm{h}_k^{\He}[t]$ is independent of all previous channel estimates and channel estimation errors, when conditioned on~$\hat{\bm{h}}_k^{\He}[t]$. We also adhere to the common convention (see~\cite{MaddahAli2012,Kobayashi2012,Gou2012}) of assuming perfect and global knowledge of channel state information at the receivers (perfect global CSIR), where the receivers know all channel states and all estimates.

%%%%%%%%%%%%%%%%%%%%%%%%%%%%%%%%%%%%%%%%%%%%%%%%%%%%%
%%%%%%%%%%%%%%%%%%%%%%%%%%%%%%%%%%%%%%%%%%%%%%%%%%%%%
\section{Main Results}\label{se:main}
%%%%%%%%%%%%%%%%%%%%%%%%%%%%%%%%%%%%%%%%%%%%%%%%%%%%%
%%%%%%%%%%%%%%%%%%%%%%%%%%%%%%%%%%%%%%%%%%%%%%%%%%%%%
%%%%%%%%%%%%%%%%%%%%%%%%%%%%%%%%%%%%%%%%%%%%%%%%%%%%%
%%%%%%%%%%%%%%%%%%%%%%%%%%%%%%%%%%%%%%%%%%%%%%%%%%%%%
We proceed directly with the main result.
\begin{theorem}
\label{thm:mainTheorem}
In the $K$-user MISO BC ($M\geq K$) with perfect delayed CSIT and $\alpha$-quality current CSIT, the optimal sum DoF is
\begin{equation}
\DoF^{\star}(\alpha)=(1-\alpha)  \frac{K}{\sum_{k=1}^K \frac{1}{k}}+\alpha K.
\label{eq:Main_1}
\end{equation}
\end{theorem}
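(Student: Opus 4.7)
The converse direction will follow immediately from the outer bound established in \cite{dekerret2013_ISIT}, which already shows $\DoF^{\star}(\alpha) \leq (1-\alpha) K/H_K + \alpha K$ under exactly the CSIT model considered here, so the entire burden lies in proving achievability. At the two extremes the story is already clear: for $\alpha = 0$ the MAT scheme of \cite{MaddahAli2012} attains $K/H_K$, while for $\alpha = 1$ plain ZF with essentially perfect instantaneous CSIT attains $K$. The goal of the construction is to interpolate these two corners linearly in $\alpha$ through a single unified transmission strategy, which I will call the $\QMAT$ scheme.

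My plan is to build this scheme as a $K$-phase protocol that generalizes the two-user construction of \cite{Kobayashi2012} to arbitrary $K$. Phase $j\in\{1,\ldots,K\}$ will be devoted to delivering order-$j$ common symbols, each intended to be decodable by a $j$-subset $\mathcal{S}\subseteq\mathcal{K}$ of users. Within each phase, the transmitted vector will be the superposition of a private ZF-precoded layer — built from the imperfect current estimates $\hat{\bm{h}}_k^{\He}[t]$ and allocated a power level $P^{\alpha}$ — and a common layer carrying the order-$j$ MAT symbols fed forward from the previous phase and allocated the remaining power of order $P$. The ZF layer will contribute $\alpha$ DoF per user per channel use, since its residual interference at each receiver has power $P^{\alpha}\cdot P^{-\alpha} = 1$, sitting at the noise floor. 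The order-$j$ common symbols themselves will be obtained from the residual interference overheard during phase $j-1$, using perfect delayed CSIT in the usual MAT fashion.

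The key new ingredient — and the reason for the name Quantized-MAT — is that the residual interference produced by the imperfect ZF layer will not be retransmitted in analog form, as was done in the earlier approach of \cite{dekerret2013_ISIT}. Instead, it will be digitally quantized at a rate matched to its power $P^{1-\alpha}$, and the quantized terms will then be combined across users via the standard MAT side-information mechanism (bit-wise XOR or an equivalent linear combination over a finite alphabet) so that each intended user, armed with global delayed CSIR, can recover its own share. This discretization is essential: it prevents the cascade of analog interference that caused the MAT-type component of the DoF to saturate at $2$ in all previous combinations of MAT with ZF, because once the common messages are digital their contribution can be subtracted exactly after decoding and no residual analog distortion bleeds into subsequent phases.

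Finally, the DoF calculation will proceed by setting up and solving the phase-length recursion that the scheme induces: phase $j$ produces a prescribed number of order-$(j+1)$ common bits required to feed phase $j+1$, while at the same time each user accumulates an $\alpha\log P$ contribution per channel use from the ZF layer and a $1/j$ share of the common-layer rate from phase $j$. Summing over $j=1,\ldots,K$ and eliminating the phase durations should collapse, after the usual telescoping that produces the harmonic number, to exactly $\bigl[(1-\alpha)K/H_K + \alpha K\bigr]\log P + o(\log P)$, matching the outer bound. The step I expect to be most delicate is the joint tuning of the three power levels involved — the ZF power allocation $P^{\alpha}$, the quantization rate matched to $P^{1-\alpha}$, and the MAT-aggregated common-layer power $P$ — so that at every receiver and in every phase the noise, quantization distortion and residual ZF leakage all remain at order $P^0$ while the intended signal stays at order $P$; ensuring that these three levels align exactly and simultaneously for all $K$ and all $\alpha\in[0,1]$ is the main obstacle I foresee.
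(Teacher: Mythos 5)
Your overall architecture is the right one and matches the paper's: the converse is indeed delegated to the outer bound of \cite{dekerret2013_ISIT}, and achievability is built on the $K$-phase MAT skeleton with a superimposed ZF layer at power $P^{\alpha}$, with the overheard residual interference (of power $P^{1-\alpha}$) digitally quantized and XOR-combined into order-$(j+1)$ common symbols. However, there is a genuine gap at the heart of the digitization step, and it is not the power-level tuning you flag at the end. Once the interference is quantized \emph{at the transmitter}, each receiver $k\in\bar{\mathcal{S}}$ must obtain the exact quantization codeword $\hat{i}_k[t_{\mathcal{S}}]$ of the interference it itself overheard, both to unlock the XOR (which only reveals the other users' terms given one's own) and to cancel the $\QMAT$ layer down to the noise floor so that the $P^{\alpha}$-level ZF symbols become decodable. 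In analog MAT the receiver ``already has'' its overheard interference, but after quantization it only holds a noisy analog observation $i_k+z_k$ (plus the full-power auxiliary symbol), and for a generic rate--distortion quantizer $Q(i_k+z_k)\neq Q(i_k)$. Nor can the $(1-\alpha)\log_2 P$ quantization bits simply be shipped as side information: the only available side channel (symbols decoded on top of interference of power $P^{1-\alpha}$ or ZF leakage of power $P^{\alpha}$) supports just $\min(\alpha,1-\alpha)\log_2 P$ bits per user per slot, which is strictly insufficient whenever $\alpha<\tfrac12$. Your proposal never says how the receiver acquires these codewords, and without an answer the scheme's rate accounting fails exactly in the regime $\alpha<\tfrac12$.

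The paper resolves this with a purpose-built robust quantizer (Lemma~\ref{lemma_quantizer}): a codebook with guaranteed minimum distance such that $Q_{\beta_1,\beta_2}(\sqrt{P^{\beta_1}}y+\sqrt{P^{\beta_2}}n)=Q_{\beta_1,\beta_2}(\sqrt{P^{\beta_1}}y)$ with probability tending to one, at the cost of a vanishing rate penalty. The receiver then recovers the top $(1-2\alpha)\log_2 P$ bits of $\hat{i}_k$ by re-quantizing its \emph{own} received signal, and only the residual $\alpha\log_2 P$ bits (the re-quantized quantization noise $\hat{n}_k$) are delivered through low-rate auxiliary symbols --- which, because they require delayed CSIT not yet available, must be sent in the same phase of the \emph{next round}, forcing the multi-round structure and the induction-over-rounds decoding argument that your outline omits entirely. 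You would need to either reproduce this two-step robust quantization and the round-based bookkeeping, or supply an alternative mechanism delivering $(1-\alpha)\log_2 P$ bits of side information per overheard interference term without consuming DoF; as written, the plan does not close.
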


\begin{figure} 
\centering
\includegraphics[width=0.7\columnwidth]{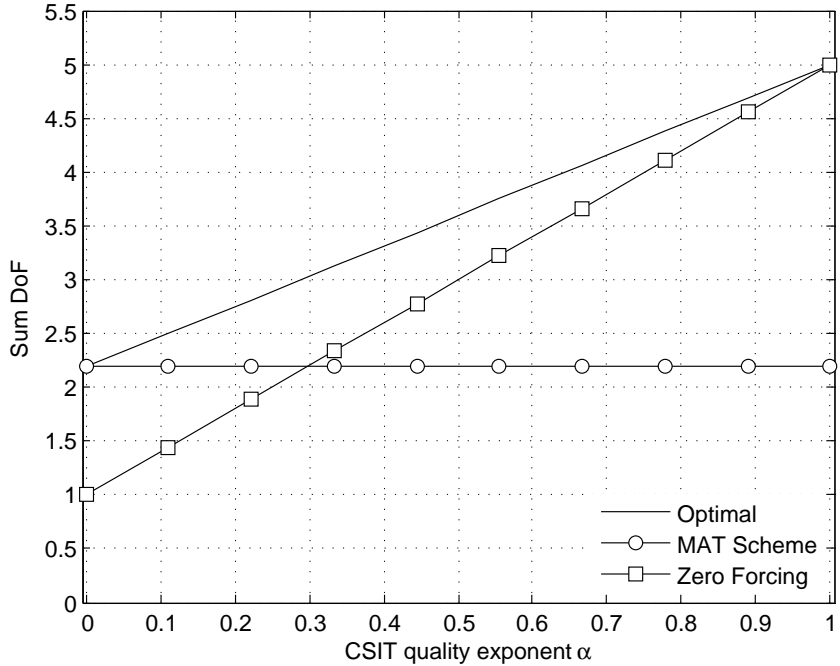}
\caption{Sum DoF achieved as a function of the CSIT scaling exponent~$\alpha$ for $K=5$~users.}
\label{ITW2016_journal_DoF_5users}
\end{figure}
\begin{proof}
The scheme that achieves this DoF, is described in the next sections. The optimality follows from the fact that this scheme's performance matches the DoF outer bound in~\cite{dekerret2013_ISIT}.
\end{proof}

Figure~\ref{ITW2016_journal_DoF_5users} shows the optimal DoF for the $5$-user case, and compares this performance to the MAT-only and ZF-only DoF performance. The proposed scheme significantly outperforms both ZF and MAT, and the gap increases with $K$. 

We proceed by describing first the simpler case of $K=2$ so as to get the proper insight to understand the general $K$ case, presented after. %, following the exposition in \cite{MaddahAli2012},
\FloatBarrier
%%%%%%%%%%%%%%%%%%%%%%%%%%%%%%%%%%%%%%%%%%%%%%%%%%%%%
%%%%%%%%%%%%%%%%%%%%%%%%%%%%%%%%%%%%%%%%%%%%%%%%%%%%%
%%%%%%%%%%%%%%%%%%%%%%%%%%%%%%%%%%%%%%%%%%%%%%%%%%%%%
%%%%%%%%%%%%%%%%%%%%%%%%%%%%%%%%%%%%%%%%%%%%%%%%%%%%%
\section{Scheme for the $K=2$ user case}\label{se:2users}
%%%%%%%%%%%%%%%%%%%%%%%%%%%%%%%%%%%%%%%%%%%%%%%%%%%%%
%%%%%%%%%%%%%%%%%%%%%%%%%%%%%%%%%%%%%%%%%%%%%%%%%%%%%
%%%%%%%%%%%%%%%%%%%%%%%%%%%%%%%%%%%%%%%%%%%%%%%%%%%%%
%%%%%%%%%%%%%%%%%%%%%%%%%%%%%%%%%%%%%%%%%%%%%%%%%%%%%
As stated above, we begin with the scheme description for the two-user case. Although the main motivation for the our scheme is clearly the case of $K>2$, this simple setting contains all the key components and allows to present in detail all the steps, with lighter notations than in the general $K$-user case. Note that the $K=3$-user case is described in the conference version of this work \cite{dekerret2016_ITW} and can serve as an additional example to gain insights before dealing with the general case \footnote{While a bit premature at this stage, we hasten to note that this exposition of the two-user case, nicely accentuates the difference between our scheme and previous efforts \cite{Gou2012,Yang2013} which --- albeit optimal for the two user case --- had to deviate from the MAT-type canonical structure in order to accommodate for the ZF component, thus making it difficult to extend to higher dimensions.}. %The intuition gained in this setting will be key to understanding the more abstract description of the general $K$-user case.

%%%%%%%%%%%%%%%%%%%%%%%%%%%%%%%%%%%%%%%%%%%%%%%%%%%%%
\subsection{Encoding and Transmission}
%%%%%%%%%%%%%%%%%%%%%%%%%%%%%%%%%%%%%%%%%%%%%%%%%%%%%
The transmission follows closely the MAT multi-destination multi-layer scheme in \cite{MaddahAli2012} and we refer to this work for a nice and intuitive description of the MAT scheme. Consequently, our scheme is also divided into two phases for the $2$-user case, where phase~$1$ corresponds to the transmission of order-$1$ data symbols (meant for one user at a time) and spans $2$ Time Slots (TS), while phase~$2$ corresponds to the transmission of order-$2$ data symbols (meant for both users simultaneously) and spans $1$~TS.

A first deviation from existing schemes can be found in the fact that the scheme requires several \emph{rounds} (each of them following the structure in phases of the MAT scheme). %The decoding of each round requires input from the next round, and the large number of round the basis of which it generates ans guarantees that the termination step does not have a notable effect on the overall DoF performance.

\begin{remark}
The encoding across rounds was already used in \cite{dekerret2013_ISIT} and is a consequence of the delayed CSIT assumption. Indeed, the CSI necessary for the transmission of some of the data symbols (so-called \emph{auxiliary data symbols}) is not available when the transmission occurs. Therefore, this auxiliary data symbol is instead transmitted in the same phase of the next round.\qed
\end{remark}

We describe the transmission for an arbitrary round~$N$, and we will use the shorthand notation~$\bullet^{(R_N)}$ to denote the fact that the index of the round is equal to~$N$. The particularities of the first and last rounds will be clarified later on.
%%%%%%%%%%%%%%%%%%%%%%%%%%%%%%%%%%%%%%%%%%%%%%%%%%%%%
\subsubsection{Phase~$1$}
%%%%%%%%%%%%%%%%%%%%%%%%%%%%%%%%%%%%%%%%%%%%%%%%%%%%%
Phase~$1$ spans two time slots, denoted by $t_1^{(R_N)}$ and $t_2^{(R_N)}$.
For $t = t_1^{(R_N)}$, the transmitted signal is given by
\begin{equation}
\xv[t]\!=\!\bV[t]\bbm[t]\!+ \bv_{2}^{\ZF}[t]a_2[t]\!+\!\sum_{k=1}^{2}\bv_{k}^{\ZF}[t]s_k[t]
\label{eq:2user_1}
\end{equation}
where
\begin{itemize}
\item $\bbm[t] \in \mathbb{C}^{2}$ is a vector containing two so-called \emph{Q-MAT data symbols} meant for user~$1$, each carrying $(1-\alpha)\log_2(P)$~bits, where the first symbol is allocated full power $\E\LSB|\{\bbm[t]\}_1 |^2\RSB\doteq P$, while the second symbol is allocated lesser power $\E\LSB|\{\bbm[t]\}_2 |^2\RSB\doteq P^{1-\alpha}$. Furthermore $\bV[t]\in\mathbb{C}^{M\times 2}$ is defined as
\begin{equation}
\bV[t]\triangleq \begin{bmatrix} \bv_{1}^{\ZF}[t]&\bu_1\end{bmatrix}
\label{eq:2user_2}
\end{equation}
where $\bv_{1}^{\ZF}[t] \in \mathbb{C}^{M}$ is the unit-norm ZF beamformer aimed at user~$1$ (i.e., which is orthogonal to the current estimate of the channel to user 2), while $\bu_1\in \mathbb{C}^{M}$ is a unit-norm vector that is randomly drawn and isotropically distributed.
\item $a_2[t] \in \mathbb{C}$ is a so-called \emph{auxiliary data symbol} meant for user~$2$, carrying $\min(1-\alpha,\alpha)\log_2(P)$~bits (generally from previous interfering terms), and allocated full power $\E\LSB|a_2[t]|^2\RSB\doteq P$.
\item $s_k[t]\in \mathbb{C}, \ k\in \{1,2\}$ are \emph{ZF data symbols} meant for user~$k$, each carrying $\alpha\log_2(P)$~bits and each having power $\E\LSB|s_k[t]|^2\RSB\doteq P^{\alpha}$.
\end{itemize}

Upon omitting the noise realizations for simplicity, the received symbols during $t=t_1^{(R_N)}$, can be written as
\begin{equation}
\begin{aligned}
y_{1}[t]&\!=\!\underbrace{\bh_1^{\He}[t]\bV[t]\bbm[t]}_{\doteq P} +\underbrace{\bh_1^{\He}[t]\bv_2^{\ZF}[t] a_2[t]}_{\doteq P^{1-\alpha}}\!+\!\underbrace{z_1[t]}_{\doteq P^{\alpha}}\\
y_2[t]&=\underbrace{\bh_2^{\He}[t]\bv_2^{\ZF}[t] a_2[t]}_{\doteq P}+ \underbrace{i_2[t]}_{\doteq P^{1-\alpha}} +\underbrace{z_2[t]}_{\doteq P^{\alpha}}
\end{aligned}
\label{eq:2user_3}
\end{equation}
where
\begin{equation}
\begin{aligned}
i_2[t]&\triangleq\underbrace{\bh_2^{\He}[t]\bV[t]\bbm[t]}_{\doteq P^{1-\alpha}}
\end{aligned}
\label{eq:2user_4}
\end{equation}
and where for any $t\in \mathbb{Z}^{+}$,
\begin{equation}
\begin{aligned}
\!\!z_k[t]&\!\triangleq\! \underbrace{\bh_k^{\He}[t]\bv_k^{\ZF}[t] s_k[t]\!}_{\doteq P^{\alpha}}+\underbrace{\bh_{k}^{\He}[t]\bv_{\bar{k}}^{\ZF}[t] s_{\bar{k}}[t]}_{\doteq P^{0}}, \qquad k\in \{1,2\}.
\end{aligned}
\label{eq:2user_5}
\end{equation}
In the above equations, underneath each summand, we describe the asymptotic approximation of the power of the corresponding term.

For $t= t_2^{(R_N)}$, the transmission is described by the above equations, after exchanging the indices of the two users ($1 \leftrightarrow 2$). Thus, the received signals during $t=t_2^{(R_N)}$, take the form
\begin{equation}
\begin{aligned}
y_1[t]&\!=\underbrace{\bh_1^{\He}[t]\bv_1^{\ZF}[t] a_1[t]}_{\doteq P}\!+\underbrace{i_1[t]}_{\doteq P^{1-\alpha}} +\!\underbrace{z_1[t]}_{\doteq P^{\alpha}}\\
y_2[t]&=\underbrace{\bh_2^{\He}[t]\bV[t]\bbm[t]}_{\doteq P}+\underbrace{\bh_2^{\He}[t]\bv_1^{\ZF}[t] a_1[t]}_{\doteq P^{1-\alpha}} +\underbrace{z_2[t]}_{\doteq P^{\alpha}}
\end{aligned}
\label{eq:2user_6}
\end{equation}
where now
\begin{equation}
\begin{aligned}
i_1[t]&\triangleq\underbrace{\bh_1^{\He}[t]\bV[t]\bbm[t]}_{\doteq P^{1-\alpha}}.
\end{aligned}
\label{eq:2user_7}
\end{equation}

%%%%%%%%%%%%%%%%%%%%%%%%%%%%%%%%%%%%%%%%%%%%%%%%%%%%%
\paragraph{Interference quantization}
%%%%%%%%%%%%%%%%%%%%%%%%%%%%%%%%%%%%%%%%%%%%%%%%%%%%%

At the end of phase~$1$, the TX can use its delayed CSIT to compute the interference terms $i_2[t_{1}^{(R_N)}]$ and $i_1[t_{2}^{(R_N)}]$. These terms are first quantized according to a specifically designed quantizer, before being used to generate the data symbols which will be transmitted either in the next phase of the same round or in the same phase of the next round. 

We now present the quantization scheme, which needs to be differentiated depending whether $\alpha \leq \frac{1}{2}$ or not. Let us first consider the more involved case of $\alpha \leq \frac{1}{2}$. Our quantizer is defined in the following lemma.
\begin{lemma}
\label{lemma_quantizer}
Let $Y$ be a unit-variance zero-mean random variable of bounded density~$p_Y$. Then, there exists a quantizer of rate $(\beta_1-\beta_2)\log_2(P)$~bits for any $0< \beta_2 \leq \beta_1$, denoted by $Q_{\beta_1,\beta_2}$, such that, for any unit-variance zero-mean random variable $n$, it holds that 
\begin{equation}
\lim_{P\rightarrow \infty}\Pr\{Q_{\beta_1,\beta_2}(\sqrt{P^{\beta_1}}y+\sqrt{P^{\beta_2}}n) = Q_{\beta_1,\beta_2}(\sqrt{P^{\beta_1}}y)\} = 1
\label{eq:2user_8}
\end{equation}
and at the same time
\begin{equation}
\E\LSB |Q_{\beta_1,\beta_2}(\sqrt{P^{\beta_1}}y)-\sqrt{P^{\beta_1}}y|^2\RSB\dotleq P^{\beta_2}.
\label{eq:2user_9}
\end{equation}
\end{lemma}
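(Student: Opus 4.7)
The plan is to realize $Q_{\beta_1,\beta_2}$ as a truncated uniform grid quantizer and then verify the two claimed properties by carefully balancing a cell size $\Delta$ and a truncation radius $r_P$. Concretely, I will take $\Delta = \sqrt{P^{\beta_2}}\log P$ (slightly above the noise scale $\sqrt{P^{\beta_2}}$) and a square truncation window $\mathcal{D}_P$ of side $2\sqrt{P^{\beta_1}}\,T_P$, with $T_P$ a slowly-growing function of $P$ such as $T_P=\log^{1/8} P$. Inputs inside $\mathcal{D}_P$ are mapped to the center of their cell; inputs outside are mapped to a fixed sentinel (say $0$). Counting cells then gives $|\mathcal{D}_P|/\Delta^2 \asymp T_P^2\, P^{\beta_1-\beta_2}/\log^2 P \le P^{\beta_1-\beta_2}$ up to polylog factors, absorbed by $\dotleq$, so the quantizer uses at most $(\beta_1-\beta_2)\log_2 P$ bits at exponential precision; for complex-valued $y$ the grid is two-dimensional and the rate matches the stated one, while for real-valued $y$ the same argument gives a smaller rate, still within the stated bound.

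For the stability claim~\eqref{eq:2user_8}, I would decompose the failure event
\begin{equation*}
\{Q(\sqrt{P^{\beta_1}}y+\sqrt{P^{\beta_2}}n)\neq Q(\sqrt{P^{\beta_1}}y)\}
\end{equation*}
into the overflow $\{|y|>T_P\}$, the large-noise event $\{|n|>\sqrt{\log P}\}$, and the boundary-proximity event that, for $|y|\le T_P$, the rescaled $\sqrt{P^{\beta_1}}y$ lies within $\sqrt{P^{\beta_2}\log P}$ of some cell boundary. Chebyshev applied to $Y$ and $n$ respectively handles the first two, since $E[|Y|^2]=E[|n|^2]=1$. The third is where the bounded-density hypothesis $\|p_Y\|_\infty\le C$ is essential: the boundary strips cover a fraction $O(\sqrt{P^{\beta_2}\log P}/\Delta)=O(1/\sqrt{\log P})$ of each cell, so integrating the bounded density over them across $\mathcal{D}_P$ yields probability $O(T_P^2 \|p_Y\|_\infty/\sqrt{\log P}) \to 0$, which is the desired bound.

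For the distortion claim~\eqref{eq:2user_9}, I split the expected squared error between $\{|y|\le T_P\}$ and its complement. The in-range contribution is bounded by $\Delta^2 = P^{\beta_2}\log^2 P \dotleq P^{\beta_2}$ by the cell-size bound. The out-of-range contribution is at most $P^{\beta_1}\,\E[|y|^2\mathbb{1}_{|y|>T_P}]$. I expect this tail term to be the main obstacle: pushing it below $P^{\beta_2}$ requires super-polynomial tail decay of $Y$, which does not strictly follow from unit variance plus bounded density alone but which does hold in the intended application, since each $Y$ to which the lemma is applied arises as a linear combination of complex Gaussian channel coefficients and a bounded number of symbol streams and so enjoys Gaussian-like tails; then $\E[|y|^2\mathbb{1}_{|y|>T_P}]$ is super-polynomially small in $P$, easily beating $P^{\beta_2-\beta_1}$. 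Once this tail bound is in place, both~\eqref{eq:2user_8} and~\eqref{eq:2user_9} follow, completing the proof. The delicate interplay—$\Delta$ must be large enough (relative to $\sqrt{P^{\beta_2}}$) for stability yet small enough for distortion, and $T_P$ must be large enough (relative to $1$) to tame the overflow yet small enough to fit within the rate budget once the bounded-density mass argument is accounted for—is the main technical point to get right.
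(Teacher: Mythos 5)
Your stability argument is essentially the paper's own: both proofs bound the failure probability by the probability that the noise is atypically large (Chebyshev, since $n$ has unit variance) plus the probability that the rescaled source lands in a union of thin strips around the cell boundaries, the latter controlled by $\sup_y p_Y(y)$ times the total strip length; in both constructions the inter-codeword spacing is taken a $\mathrm{polylog}(P)$ factor above the noise scale $\sqrt{P^{\beta_2}}$ precisely so that this fraction vanishes. The genuine gap is in the distortion claim, and it is exactly the one you flag yourself. Because you use a \emph{truncated} uniform grid, you acquire an overflow term $P^{\beta_1}\,\E\bigl[|y|^2\,\mathds{1}_{\{|y|>T_P\}}\bigr]$ that must be $\dotleq P^{\beta_2}$, i.e.\ you need $\E\bigl[|y|^2\,\mathds{1}_{\{|y|>T_P\}}\bigr]\dotleq P^{\beta_2-\beta_1}$ with $T_P$ only polylogarithmic in $P$. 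Unit variance plus bounded density give no polynomial tail decay at all --- the truncated second moment can tend to zero arbitrarily slowly --- so this term cannot be controlled under the lemma's stated hypotheses; and enlarging $T_P$ to a power of $P$ either exceeds the rate budget or forces the cell size, and hence the in-range distortion, above $P^{\beta_2}$. Your proposed remedy (importing Gaussian-like tails from the intended application) proves a different lemma with an additional tail hypothesis, not the one stated.

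The paper avoids the overflow term by never truncating: it takes the MSE-optimal, source-adapted, full-support codebook of rate $\frac{\beta_1-\beta_2}{2}\log_2 P-\frac{1}{2}\log_2\log_2 P$ bits per real dimension, bounds its distortion via the Gaussian worst case by $\log_2(P)\,P^{\beta_2}\doteq P^{\beta_2}$, and then \emph{thins} the codebook by deleting one of any two codewords closer than $\sqrt{\log_2(P)P^{\beta_2}}$. Thinning only coarsens cells, so it enforces the minimum-distance property needed for the boundary-strip argument while provably not worsening the distortion scaling, and there is no clipping region and hence no tail term. If you want to keep an explicit construction, the fix is the same in spirit: replace the uniform grid by a source-adapted (e.g.\ companded) codebook whose cells widen in the tails, or else state and verify the extra tail-decay hypothesis at every point where the lemma is invoked.
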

\begin{proof}
The design of the quantizer and the detailed proof of the result is given in the appendix.
\end{proof}
To convey some intuition on the particular properties of this quantizer, we now describe how it could be used in the following toy-example.
\begin{example}
%%%%%%%%%%%%% Add a tikz Diagram. %%%%%%%%%%%%%%%%%%%
Let us consider a setting where a first node, node~$A$, has the knowledge of a Gaussian random variable~$X\sim \CN(0,P)$ while another node, node~$b$ only obtains a corrupted version~$Y$ given by
\begin{equation}
Y=\underbrace{X}_{\doteq P}+\underbrace{N}_{\doteq P^{\gamma}}
\end{equation}
where $N$ is a Gaussian random variable with variance $P^{\gamma}$. Node~$A$ wants to transmit the minimum number of bits to node~$B$ in order for this node to reconstruct the random variable~$X$ up to the noise floor, as the power~$P$ increases. As a solution to this problem, we propose to use the quantizer~$Q_{1,\gamma}$ described in the previous lemma. Indeed, using this quantizer, it holds that
\begin{equation}
\lim_{P\rightarrow \infty}\Pr\{Q_{1,\gamma}(X+N) = Q_{1,\gamma}(X)\} = 1
\label{eq:2user_8}
\end{equation}
where 
\begin{equation}
X=Q_{1,\gamma}(X)+n_Q
\end{equation}
and with $n_Q\doteq P^{\gamma}$. Hence, if node~$A$ quantizes $n_Q$ using $\gamma\log_2(P)$~bits using an adequate quantizer from the literature, and transmit it to node~$B$, it follows from well-known results from Rate-Distortion theory that node~$B$ will be able to reconstruct~$X$ up to the noise floor, as the power~$P$ increases. Indeed, $Q_{1,\gamma}(X)$ can be obtained with probability one from~$X+N$, as $P$ increases. \qed
%Intuitively, the proposed quantized allows to obtain a common part from $X$ and $X+N$.
\end{example}

Let us now consider the quantization of the interference term $i_2[t_{1}^{(R_N)}]$ (scaling in $P^{1-\alpha}$). Using the above quantizer $Q_{\beta_1,\beta_2}$, and setting $\beta_1 = 1-\alpha, \beta_2 = \alpha$, offers us a quantized version $Q_{1-\alpha,\alpha}(i_2[t_{1}^{(R_N)}])$ that carries $(1-2\alpha)\log_2(P)$~bits, and which guarantees that the resulting quantization noise, which we denote by $n_2[t_{1}^{(R_N)}]$, has a power that scales as $P^{\alpha}$. The aforementioned quantization noise $n_2[t_{1}^{(R_N)}]$ is then itself re-quantized using any standard optimal quantizer with $\alpha\log_2(P)$~bits, which is known \cite{Cover2006} to guarantee quantization noise (from the second quantization) that scales as $P^0$. For $\hat{n}_2[t_{1}^{(R_N)}]$ denoting the quantized version of $n_2[t_{1}^{(R_N)}]$, we get the final combined estimate carrying $(1-\alpha)\log_2(P)$~bits in the form
\begin{equation}
\hat{i}_2[t_{1}^{(R_N)}]\triangleq Q_{1-\alpha,\alpha}(i_2[t_{1}^{(R_N)}]) + \hat{n}_2[t_{1}^{(R_N)}]
\label{eq:2user_10}
\end{equation}
where in the above the addition is over the complex numbers. This $2$-step quantization is illustrated in Fig.~\ref{ITW2016_journal_quantizer}.

For the easier case where $\alpha \geq \frac{1}{2}$, the interference term $i_1[t_{2}^{(R_N)}]$ is simply quantized using $(1-\alpha)\log_2(P)$ bits using a standard quantizer, guaranteed (cf.~\cite{Cover2006}) to have quantization noise that scales in $P^0$. The quantized signal obtained is also denoted by $\hat{i}_2[t_{1}^{(R_N)}]$. 
\begin{figure}
\centering
\includegraphics[width=0.7\columnwidth]{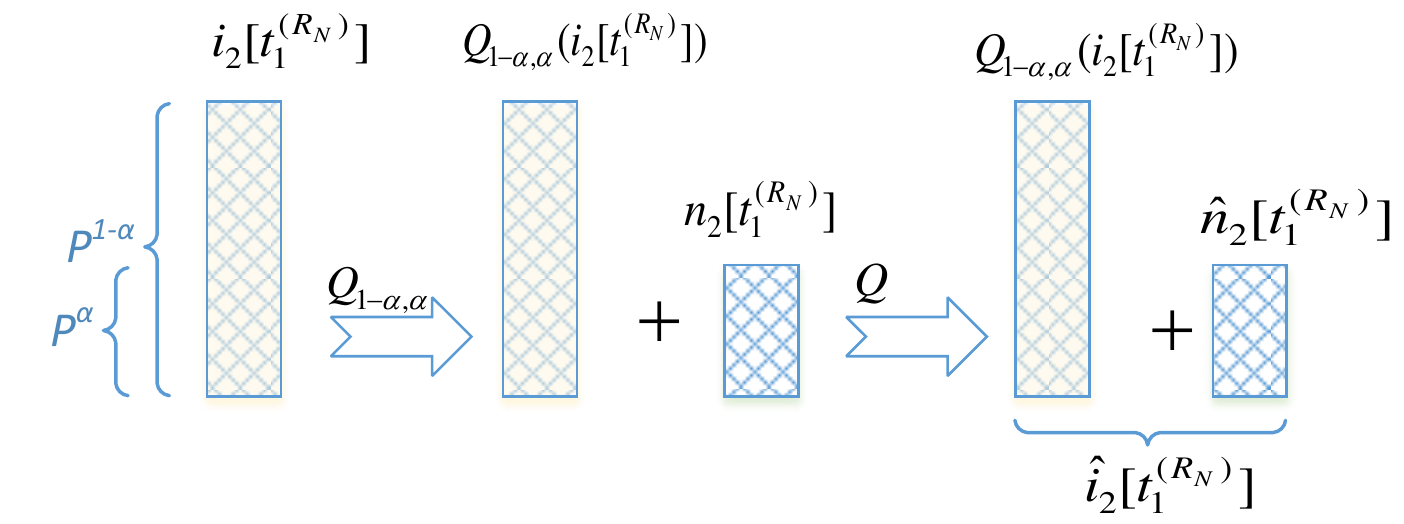}
\caption{Illustration of the $2$-step quantization scheme with the quantization noise at the noise floor being omitted.} 
\label{ITW2016_journal_quantizer}
\end{figure} 

The same quantization process is applied to $i_1[t_{2}^{(R_N)}]$ to obtain $\hat{i}_1[t_{2}^{(R_N)}]$ with the same rate and quantization-noise properties. These quantized bits will be placed in the auxiliary data symbols of the next round, as we now describe.

%%%%%%%%%%%%%%%%%%%%%%%%%%%%%%%%%%%%%%%%%%%%%%%%%%%%%
\paragraph{Generation of auxiliary data symbols for phase~$1$ of round $N+1$}
%%%%%%%%%%%%%%%%%%%%%%%%%%%%%%%%%%%%%%%%%%%%%%%%%%%%%
The auxiliary data symbol $a_{2}[t_{1}^{(R_{N+1})}]$ will carry
\begin{equation}
\begin{cases}
a_{2}[t_{1}^{(R_{N+1})}] \longleftarrow \hat{n}_2[t_{1}^{(R_N)}] &,\alpha\leq \frac{1}{2} \\
a_{2}[t_{1}^{(R_{N+1})}]\longleftarrow \hat{i}_2[t_{1}^{(R_N)}] &,\alpha\geq \frac{1}{2}
\end{cases}
\label{eq:2user_11}
\end{equation}
and similarly $a_{1}[t_{2}^{(R_{N+1})}]$ will carry the following quantization bits
\begin{equation}
\begin{cases}
a_{1}[t_{2}^{(R_{N+1})}]\longleftarrow \hat{n}_1[t_{2}^{(R_N)}] &,\alpha\leq \frac{1}{2} \\
a_{1}[t_{2}^{(R_{N+1})}]\longleftarrow \hat{i}_1[t_{2}^{(R_N)}] &,\alpha\geq \frac{1}{2}.
\end{cases}
\label{eq:2user_12}
\end{equation}
During the first round, all these auxiliary symbols are initialized to zero.

%%%%%%%%%%%%%%%%%%%%%%%%%%%%%%%%%%%%%%%%%%%%%%%%%%%%%
\paragraph{Generation of $\QMAT$ data symbols for phase~$2$ of round $N$}
%%%%%%%%%%%%%%%%%%%%%%%%%%%%%%%%%%%%%%%%%%%%%%%%%%%%%

During the second phase of round $N$, using delayed CSIT, the TX generates the following order-$2$ $\QMAT$ data symbol
\begin{align}
m[t_{1,2}^{(R_N)}]& \longleftarrow (\hat{i}_2[t_1^{(R_N)}] \oplus \hat{i}_1[t_2^{(R_N)}])
\label{eq:2user_13}
\end{align}
where $t_{1,2}^{(R_N)}$ denotes the TS used to transmit the order-$2$ data symbols.
%that carries a XOR combination $\hat{i}_2[t_1^{(R_N)}] \oplus \hat{i}_1[t_2^{(R_N)}]$ of the interference experienced at the two users.

%%%%%%%%%%%%%%%%%%%%%%%%%%%%%%%%%%%%%%%%%%%%%%%%%%%%%
\subsubsection{Phase~$2$}
%%%%%%%%%%%%%%%%%%%%%%%%%%%%%%%%%%%%%%%%%%%%%%%%%%%%%
Phase~$2$ consists of one TS $t=t_{1,2}^{(R_N)}$, during which
\begin{equation}
\xv[t]\!=\bv[t]m[t]+\!\sum_{k=1}^{2}\bv_{k}^{\ZF}[t]s_k[t]
\label{eq:2user_14}
\end{equation}
where
\begin{itemize}
\item $m[t] \in \mathbb{C}$ is an order-$2$ $\QMAT$ data symbol which carries $(1-\alpha)\log_2(P)$~bits that originate from the previous phase of the same round (see \eqref{eq:2user_13}), and which is destined for both users. The symbol is allocated full power $\E\LSB|m[t]|^2\RSB\doteq P$.
\item $s_k[t], \ k\in \{1,2\}$ is a ZF data symbol destined for user~$k$, carrying $\alpha\log_2(P)$~bits, and having power $\E\LSB|s_k[t]|^2\RSB\doteq P^{\alpha}$.
\end{itemize}
Upon omitting the noise realizations, the received signals during TS $ t=t_{1,2}^{(R_N)}$ take the form
\begin{equation}
\begin{aligned}
y_{1}[t]&=\underbrace{\bh_1^{\He}[t]\bv[t] m[t]}_{\doteq P}  +\underbrace{z_1[t]}_{\doteq P^{\alpha}}\\
y_2[t]&=\underbrace{\bh_2^{\He}[t]\bv[t] m[t]}_{\doteq P} +\underbrace{z_2[t]}_{\doteq P^{\alpha}} .
\end{aligned}
\label{eq:2user_15}
\end{equation}

%%%%%%%%%%%%%%%%%%%%%%%%%%%%%%%%%%%%%%%%%%%%%%%%%%%
%%%%%%%%%%%%%%%%%%%%%%%%%%%%%%%%%%%%%%%%%%%%%%%%%%%
\subsection{Decoding (achievability proof by induction)}\label{se:2users_decoding:induction}
%%%%%%%%%%%%%%%%%%%%%%%%%%%%%%%%%%%%%%%%%%%%%%%%%%%
%%%%%%%%%%%%%%%%%%%%%%%%%%%%%%%%%%%%%%%%%%%%%%%%%%%
We now turn to the decoding part, which here --- for the sake of clarity--- will be assumed to start after the end of transmission in all rounds and all phases. We will show that each user can decode all its desired data symbols. The proof has to be done by induction due to the fact that the auxiliary data symbols contain information coming from the previous round.

Let us consider without loss of generality the decoding at user~$1$. Our induction statement is that if the auxiliary data symbol $a_{1}[t_{2}^{(R_N)}]$ and $a_{2}[t_{1}^{(R_{N})}]$ are decoded at user~$1$, user~$1$ can decode:
\begin{itemize}
\item Its destined $\QMAT$ data symbols $\bbm[t_{1}^{(R_N)}]$ and $m[t_{1,2}^{(R_N)}]$
\item Its destined ZF data symbols $s_1[t_{1}^{(R_N)}],s_1[t_{2}^{(R_N)}],s_1[t_{1,2}^{(R_N)}]$
\item The auxiliary data symbols of the following round $a_{1}[t_{2}^{(R_{N+1})}]$ and $a_{2}[t_{1}^{(R_{N+1})}]$, thus allowing for the inductive step.
\end{itemize}
The initialization is done for $n=0$ by considering that all data symbols of round~$n=0$ have zero rate and by setting all auxiliary data symbols of round~$n=1$ to zero. Indeed, all data symbols for $n=0$ are decoded, as well as the auxiliary data symbols of round $n+1=1$. Thus, the induction property is satisfied for $n=0$. 

We then proceed to consider an arbitrary round~$N>0$. As part of the induction, we consider $a_{1}[t_{2}^{(R_N)}]$ and $a_{2}[t_{1}^{(R_N)}]$ to be already decoded at user~$1$.

%%%%%%%%%%%%%%%%%%%%%%%%%%%%%%%%%%%%%%%%%%%%%%%%%%%
\paragraph{Decoding of phase~$2$}
%%%%%%%%%%%%%%%%%%%%%%%%%%%%%%%%%%%%%%%%%%%%%%%%%%%
As a first step, the $\QMAT$ and ZF data symbols of phase~$2$ of round~$N$ are decoded using successive decoding. Indeed, as we see from \eqref{eq:2user_15}, the SINR of the $\QMAT$ data symbol~$m[t_{1,2}^{(R_N)}]$ is in the order of $P^{1-\alpha}$, which matches the scaling of the data symbol's rate. Using successive decoding, the ZF data symbol~$s_1[t_{1,2}^{(R_N)}]$ is then also decoded.

%%%%%%%%%%%%%%%%%%%%%%%%%%%%%%%%%%%%%%%%%%%%%%%%%%%
\paragraph{Decoding the interference in phase~$1$ (round $N$)}\label{se:2users_decoding:interference}
%%%%%%%%%%%%%%%%%%%%%%%%%%%%%%%%%%%%%%%%%%%%%%%%%%%
Receiver~$1$ then uses the signal received during TS~$t_{2}^{(R_{N+1})}$ (i.e., round $N+1$) to decode $a_1[t_{2}^{(R_{N+1})}]$. This is possible because, as seen in~\eqref{eq:2user_6}, the scaling of the SINR of $a_1[t_{2}^{(R_{N+1})}]$ is $P^{\min(\alpha,1-\alpha)}$. The content of $a_1[t_{2}^{(R_{N+1})}]$ depends on whether $\alpha\geq \frac{1}{2}$ or $\alpha\leq \frac{1}{2}$.
\begin{itemize}
\item If $\alpha\geq \frac{1}{2}$, user~$1$ has obtained $\hat{i}_1[t_{2}^{(R_{N})}]$ from $a_1[t_{2}^{(R_{N+1})}]$.
\item If $\alpha\leq \frac{1}{2}$, user~$1$ has obtained $\hat{n}_1[t_{2}^{(R_{N})}]$ from $a_1[t_{2}^{(R_{N+1})}]$. To recover the quantized interference $\hat{i}_1[t_{2}^{(R_{N})}]$, it is necessary for user~$1$ to also decode $Q_{1-\alpha,\alpha}\LB i_1[t_{2}^{(R_{N})}]\RB $ to form $\hat{i}_1[t_{2}^{(R_{N})}]$ in a similar way as in \eqref{eq:2user_10}. 

The term $Q_{1-\alpha,\alpha}\LB i_1[t_{2}^{(R_{N})}]\RB $ can be obtained at user~$1$ by applying the quantizer~$Q_{1-\alpha,\alpha}$ to the received signal~$y_1[t_{2}^{(R_{N})}]$. Indeed, following Lemma~\ref{lemma_quantizer}, it holds that
\begin{equation}
\lim_{P\rightarrow \infty} \Pr\left\{Q_{1-\alpha,\alpha}\LB y_1[t_{2}^{(R_{N})}]\RB=Q_{1-\alpha,\alpha}\LB i_1[t_{2}^{(R_{N})}]\RB\right\}=1.
\label{eq:2user_16}
\end{equation} 
Thus, in the limit of large $P$, user~$1$ can decode $Q_{1-\alpha,\alpha}\LB y_1[t_{2}^{(R_N)}]\RB$ with probability one, and then combine it with $\hat{n}_1[t_{2}^{(R_{N})}]$ to obtain $\hat{i}_1[t_{2}^{(R_{N})}]$.
\end{itemize}
In both cases $\alpha\geq \frac{1}{2}$ and $\alpha\leq \frac{1}{2}$, user~$1$ has obtained~$\hat{i}_1[t_{2}^{(R_{N})}]$.

%%%%%%%%%%%%%%%%%%%%%%%%%%%%%%%%%%%%%%%%%%%%%%%%%%%
\paragraph{Decoding of the $\QMAT$ data symbols of phase~$1$}
%%%%%%%%%%%%%%%%%%%%%%%%%%%%%%%%%%%%%%%%%%%%%%%%%%%
By induction, the auxiliary data symbol $a_1[t_{2}^{(R_{N})}]$ is known at user~$1$ such that its contribution to the received signal can be removed. Furthermore, due to the decoding of phase~$2$, user~$1$ has obtained
\begin{equation}
\begin{aligned}
m[t_{1,2}^{(R_N)}]=(\hat{i}_1[t_{2}^{(R_{N})}] \oplus \hat{i}_2[t_{1}^{(R_{N})}]).
\end{aligned}
\label{eq:2user_17}
\end{equation}
Using~$\hat{i}_1[t_{2}^{(R_N)}]$, user~$1$ obtains~$\hat{i}_2[t_{1}^{(R_N)}]$, and thus user~$1$ has knowledge (up to the noise level) of the following two components
\begin{equation}
\begin{aligned}
&\underbrace{\bh_1^{\He}[t_{1}^{(R_N)}]\bV[t_{1}^{(R_N)}]\bbm[t_{1}^{(R_N)}]+\bh_{1}^{\He}[t_{1}^{(R_N)}]\bv_2^{\ZF}[t_{1}^{(R_N)}] a_2[t_{1}^{(R_N)}]}_{\doteq P} +\underbrace{z_1[t_{1}^{(R_N)}]}_{\doteq P^{\alpha}}\\
&\underbrace{\bh_2^{\He}[t_{1}^{(R_N)}]\bV[t_{1}^{(R_N)}]\bbm[t_{1}^{(R_N)}]}_{\doteq P^{1-\alpha}}.
\end{aligned}
\label{eq:2user_18}
\end{equation}

By induction, $ a_2[t_{1}^{(R_N)}]$ is assumed to be already decoded at user~$1$, such that its contribution to the received signal in \eqref{eq:2user_18} can be removed. Consequently, user~$1$ has obtained two signals with a SINR scaling in $P^{1-\alpha}$, and can decode its two destined data symbols contained in $\bbm[t_{1}^{(R_N)}]$.

%%%%%%%%%%%%%%%%%%%%%%%%%%%%%%%%%%%%%%%%%%%%%%%%%%%
\paragraph{Decoding of the auxiliary data symbols of round~$N+1$}
%%%%%%%%%%%%%%%%%%%%%%%%%%%%%%%%%%%%%%%%%%%%%%%%%%%

The decoding of~$a_{2}[t_{1}^{(R_{N+1})}]$ follows directly from the definition of the auxiliary data symbol in \eqref{eq:2user_11}. Indeed, the auxiliary data symbol $a_{2}[t_{1}^{(R_{N+1})}]$ is a function of the $\QMAT$ data symbols~$\bbm[t_{1}^{(R_N)}]$, which have already been decoded by user~$1$. It can thus decode $a_{2}[t_{1}^{(R_{N+1})}]$, which provides the induction to the next round.

%%%%%%%%%%%%%%%%%%%%%%%%%%%%%%%%%%%%%%%%%%%%%%%%%%%
\paragraph{Decoding of the ZF data symbols at user~$1$}
%%%%%%%%%%%%%%%%%%%%%%%%%%%%%%%%%%%%%%%%%%%%%%%%%%%
User~$1$ has decoded its $\QMAT$ data symbols and the auxiliary data symbols needed for the next round. Thus, it remains to show that all the private data symbols destined to user~$1$ can also be decoded. For $t=t_1^{(R_N)}$ and $t=t_{1,2}^{(R_N)}$, user~$1$ has decoded the $\QMAT$ data symbols transmitted. Consequently, user~$1$ can use successive decoding to decode its destined ZF data symbol~$s_1[t_{1}^{(R_N)}]$ and $s_1[t_{1,2}^{(R_N)}]$ (see \eqref{eq:2user_3} and \eqref{eq:2user_15}). Yet, the $\QMAT$ data symbols sent during $t_{2}^{(R_N)}$ have not been decoded at user~$1$ (as destined to user~$2$), and thus successive decoding can not be used directly here. Yet, as user~$1$ has reconstructed $i_1[t_{2}^{(R_{N})}]$ (up to bounded noise), it can use it to remove the interference generated by the $\QMAT$ data symbols that were meant for user~$2$, and can consequently decode its own ZF data symbols (see~\eqref{eq:2user_6}).

\begin{remark}
The fact that user~$1$ is able to remove $i_1[t_{2}^{(R_{N})}]$ up to the noise floor, is the key property of our transmission scheme. This allows to remove all the received signals generated by the $\QMAT$ data symbols during all TS, thus making possible the ZF transmission with power $P^{\alpha}$.\qed
\end{remark}
The decoding process is illustrated in Fig.~\ref{ITW2016_journal_decoding}.
\begin{figure}
\centering
\includegraphics[width=0.7\columnwidth]{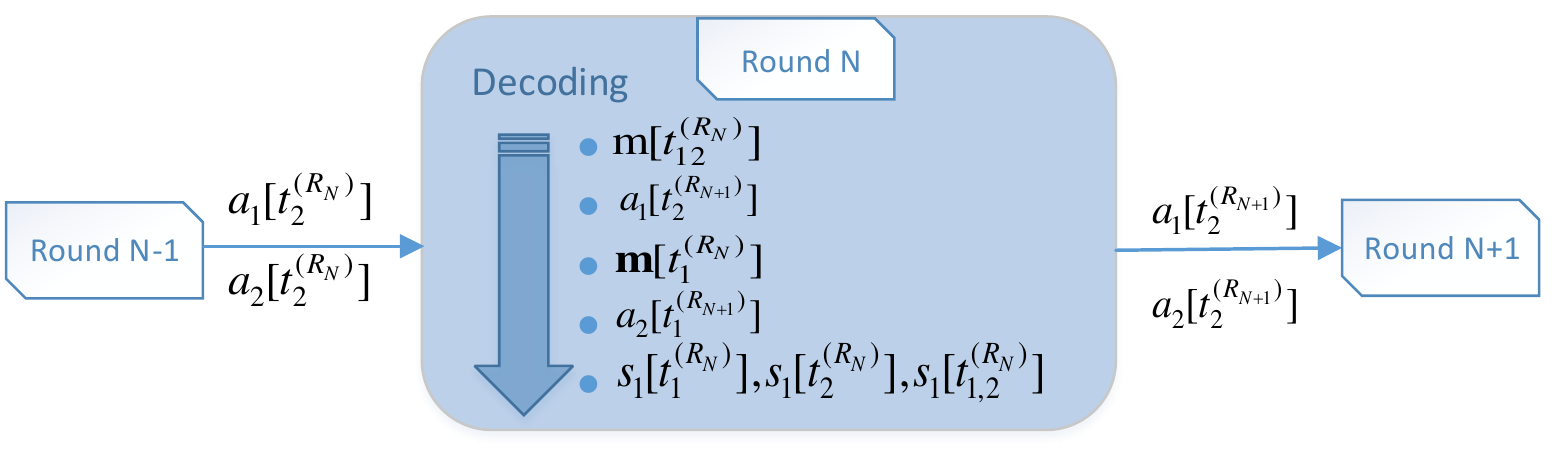}
\caption{Illustration of the decoding process for $K=2$.} 
\label{ITW2016_journal_decoding}
\end{figure}

%%%%%%%%%%%%%%%%%%%%%%%%%%%%%%%%%%%%%%%%%%%%%%%%%%%
%%%%%%%%%%%%%%%%%%%%%%%%%%%%%%%%%%%%%%%%%%%%%%%%%%%
\subsection{Calculation of the DoF}
%%%%%%%%%%%%%%%%%%%%%%%%%%%%%%%%%%%%%%%%%%%%%%%%%%%
%%%%%%%%%%%%%%%%%%%%%%%%%%%%%%%%%%%%%%%%%%%%%%%%%%%
In this particular $K=2$ setting, one $\QMAT$ round lasts $3$ TS during which $2$~$\QMAT$ data symbols of rate $(1-\alpha)\log_2(P)$~bits are transmitted to each user. Furthermore, one ZF data symbol of rate $\alpha\log_2(P)$~bits is transmitted to each user during each TS. In the last round, termination is achieved by sending only auxiliary symbols, which induces a small loss in DoF performance. This loss is made negligible by considering a large number of rounds. The resulting DoF is thus
\begin{equation}
\DoF^{\QMAT}=\frac{4(1-\alpha)+6\alpha}{3},
\label{eq:2user_19}
\end{equation}
which concludes the proof for the $2$-user case.

%%%%%%%%%%%%%%%%%%%%%%%%%%%%%%%%%%%%%%
%%%%%%%%%%%%%%%%%%%%%%%%%%%%%%%%%%%%%%%%%%%%%%%%%%%
%%%%%%%%%%%%%%%%%%%%%%%%%%%%%%%%%%%%%%%%%%%%%%%%%%%
%%%%%%%%%%%%%%%%%%%%%%%%%%%%%%%%%%%%%%%%%%%%%%%%%%%
\section{$K$-user case: Encoding and Transmission}\label{se:K_user_encoding}
%%%%%%%%%%%%%%%%%%%%%%%%%%%%%%%%%%%%%%%%%%%%%%%%%%%
%%%%%%%%%%%%%%%%%%%%%%%%%%%%%%%%%%%%%%%%%%%%%%%%%%%
%%%%%%%%%%%%%%%%%%%%%%%%%%%%%%%%%%%%%%%%%%%%%%%%%%%
%%%%%%%%%%%%%%%%%%%%%%%%%%%%%%%%%%%%%%%%%%%%%%%%%%%
%%%%%%%%%%%%%%%%%%%%%%%%%%%%%%%%%%%%%%%%%%%%%%%%%%%
We now describe the $\QMAT$ scheme for an arbitrary number of users. The transmission spans several so-called $\QMAT$~rounds with each round following the multi-phase structure of MAT~\cite{MaddahAli2012} with $K$ phases. Phase $j$ aims to communicate order-$j$ (i.e., destined to $j$~users) data symbols. For clarity, our description will trace the description in Section~III.C of~\cite{MaddahAli2012}. In contrast to the original MAT scheme, the order-$j$ data symbols are now digitally encoded such that they are expected to be decoded at all $j$~corresponding RXs.

We proceed to describe the transmission during round~$N$, and assume that the transmissions up to round~$N-1$ have already been concluded. 

%%%%%%%%%%%%%%%%%%%%%%%%%%%%%%%%%%%%%%%%%%%%%%%%%%%%%
\subsection{Phase~$j$ for $j\in \{1,\ldots,K-1\}$}
%%%%%%%%%%%%%%%%%%%%%%%%%%%%%%%%%%%%%%%%%%%%%%%%%%%%%

Phase~$j$ consists in the transmission of $(K-j+1)\binom{K}{j}$~order-$j$ data symbols of rate $(1-\alpha)\log_2(P)$~bits, and in the process it leads to the generation of $j\binom{K}{j+1}$ order-$(j+1)$ data symbols of rate $(1-\alpha)\log_2(P)$~bits, as well as auxiliary symbols to be sent during phase~$j$ of round~$N+1$.

Phase $j$ of round $N$ spans $\binom{K}{j}$ time-slots, each dedicated to a subset $\mathcal{S}$ of users, for all $\mathcal{S}\subset \mathcal{K}$ of size $|\mathcal{S}|=j$ and where we denote by $t_{\mathcal{S}}^{(R_N)}$ the TS dedicated to~$\mathcal{S}$.

%%%%%%%%%%%%%%%%%%%%%%%%%%%%%%%%%%%%%%%%%%%%%%%%%%%%%
\subsubsection{Transmission at phase~$j$}
%%%%%%%%%%%%%%%%%%%%%%%%%%%%%%%%%%%%%%%%%%%%%%%%%%%%%

During $t= t_{\mathcal{S}}^{(R_N)}$, the transmit signal is given by
\begin{equation}
\xv[t_{\mathcal{S}}^{(R_N)}]\!=\!\bV[t_{\mathcal{S}}^{(R_N)}]\bbm[t_{\mathcal{S}}^{(R_N)}]\!+\sum_{\ell\in\bar{\mathcal{S}}}\bv^{\ZF}_{\ell}[t_{\mathcal{S}}^{(R_N)}] a_{\ell}[t_{\mathcal{S}}^{(R_N)}]+\!\sum_{k=1}^{K}\bv_{k}^{\ZF}[t_{\mathcal{S}}^{(R_N)}]s_k[t_{\mathcal{S}}^{(R_N)}]
\label{eq:Kuser_1}
\end{equation}
where
\begin{itemize}
\item $\bbm[t_{\mathcal{S}}^{(R_N)}] \in \mathbb{C}^{K-j+1}$ is a vector containing $(K-j+1)$ $\QMAT$ order-$j$ data symbols meant for the users in $\mathcal{S}$. Each symbol carries $(1-\alpha)\log_2(P)$~bits. The first symbol has full power $\E\LSB|\{\bbm[t_\mathcal{S}^{(R_N)}]\}_1 |^2\RSB\doteq P$, while the others have power $\E\LSB|\{\bbm[t_\mathcal{S}^{(R_N)}]\}_i |^2\RSB\doteq P^{1-\alpha},\forall i\in\{2,\ldots,K-j+1\}$. Furthermore, $\bV[t_\mathcal{S}^{(R_N)}]\in\mathbb{C}^{K\times (K-j+1)}$ is defined as
\begin{equation}
\bV[t_\mathcal{S}^{(R_N)}]\triangleq \begin{bmatrix} \bv_{\mathcal{S}}^{\ZF}[t_\mathcal{S}^{(R_N)}]&\bU_j\end{bmatrix}
\label{eq:Kuser_2}
\end{equation}
where $\bv_{\mathcal{S}}^{\ZF}[t_\mathcal{S}^{(R_N)}] \in \mathbb{C}^{K}$ is a unit-norm ZF precoder that is orthogonal to all current-CSI estimates of the channels of the users in~$\bar{\mathcal{S}}$, and where $\bU_{j}\in \mathbb{C}^{K\times (K-j)}$ is a randomly chosen, isotropically distributed unitary matrix.
\item $a_k[t_\mathcal{S}^{(R_N)}] \in \mathbb{C}, \ k\in \bar{\mathcal{S}}$ is an auxiliary data symbol meant for user~$k$, having rate $\min(1-\alpha,\alpha)\log_2(P)$~bits and power $\E\LSB|a_k[t_\mathcal{S}^{(R_N)}]|^2\RSB\doteq P$.
\item $s_k[t_\mathcal{S}^{(R_N)}], \ k\in \mathcal{K}$ is a ZF data symbol meant for user~$k$, having rate $\alpha\log_2(P)$~bits and power $\E\LSB|s_k[t_\mathcal{S}^{(R_N)}]|^2\RSB\doteq P^{\alpha}$.
\end{itemize}
At user $k\in\mathcal{S}$, the received signal for $t=t_{\mathcal{S}}^{(R_N)}$ then takes the form
\begin{align}
y_k[t_{\mathcal{S}}^{(R_N)}]&=\underbrace{\bh_k^{\He}[t_{\mathcal{S}}^{(R_N)}]\bV[t_{\mathcal{S}}^{(R_N)}] \bbm[t_{\mathcal{S}}^{(R_N)}]}_{\doteq P}+\underbrace{\bh_k^{\He}[t_{\mathcal{S}}^{(R_N)}]\sum_{\ell\in\bar{\mathcal{S}}}\bv^{\ZF}_{\ell}[t_{\mathcal{S}}^{(R_N)}] a_{\ell}[t_{\mathcal{S}}^{(R_N)}]}_{\doteq P^{1-\alpha}}+\underbrace{z_k[t_{\mathcal{S}}^{(R_N)}]}_{\doteq P^{\alpha}}
\label{eq:Kuser_3}
\end{align}
where for $k\in \mathcal{K}$, we have 
\begin{align}
z_k[t]&\triangleq \underbrace{\bh^{\He}_k[t] \bv^{\ZF}_k[t]s_k[t]}_{\doteq P^{\alpha}}+\underbrace{\bh^{\He}_k[t]\sum_{\ell=1,\ell\neq k}^K \bv^{\ZF}_{\ell}[t]s_{\ell}[t]}_{\doteq P^{0}}.
\label{eq:Kuser_4}
\end{align}
At user $k\in\bar{\mathcal{S}}$, the received signal for $t=t_{\mathcal{S}}^{(R_N)}$ is given by
\begin{align}
y_k[t_{\mathcal{S}}^{(R_N)}]&=\underbrace{\bh_k^{\He}[t_{\mathcal{S}}^{(R_N)}]\bv^{\ZF}_{k}[t_{\mathcal{S}}^{(R_N)}] a_{k}[t_{\mathcal{S}}^{(R_N)}]}_{\doteq P}+\underbrace{i_k[t_{\mathcal{S}}^{(R_N)}]}_{\doteq P^{1-\alpha}}    +\underbrace{z_k[t_{\mathcal{S}}^{(R_N)}]}_{\doteq P^{\alpha}}
\label{eq:Kuser_5}
\end{align}
where we have introduced the short-hand notation $i_k[t_{\mathcal{S}}^{(R_N)}]$ for $k\in\bar{\mathcal{S}}$ as
\begin{align}
i_k[t_{\mathcal{S}}^{(R_N)}]&\triangleq \underbrace{\bh_k^{\He}[t_{\mathcal{S}}^{(R_N)}]\bV[t_{\mathcal{S}}^{(R_N)}] \bbm[t_{\mathcal{S}}^{(R_N)}]}_{\doteq P^{1-\alpha}}+\underbrace{\bh_k^{\He}[t_\mathcal{S}^{(R_N)}]\sum_{\ell\in\bar{\mathcal{S}},\ell\neq k}\bv^{\ZF}_{\ell}[t_{\mathcal{S}}^{(R_N)}] a_{\ell}[t_{\mathcal{S}}^{(R_N)}]}_{\doteq P^{1-\alpha}}.
\label{eq:Kuser_6}
\end{align}
For the above, it is easy to see (See for example~\cite{Jindal2006}) that ZF beamforming which uses $\alpha$-quality current CSIT, reduces the scaling of the interference power by a multiplicative factor of $P^{-\alpha}$.

%%%%%%%%%%%%%%%%%%%%%%%%%%%%%%%%%%%%%%%%%%%%%%%%%%%%%
\subsubsection{Generation of new data symbols}
%%%%%%%%%%%%%%%%%%%%%%%%%%%%%%%%%%%%%%%%%%%%%%%%%%%%%

The generation of the data symbols that are to be transmitted in phase~$j+1$ of round~$N$ and in phase~$j$ of round~$N+1$ from the interference generated during phase~$j$ of round~$N$ is one of the key ingredients of our scheme. We now consider that the transmissions of phase~$j$ have ended for every possible set~$\mathcal{S}\subset \mathcal{K}$ for which $|\mathcal{S}|=j$.

%%%%%%%%%%%%%%%%%%%%%%%%%%%%%%%%%%%%%%%%%%%%%%%%%%%%%
\paragraph{Preliminary step: Interference quantization}
%%%%%%%%%%%%%%%%%%%%%%%%%%%%%%%%%%%%%%%%%%%%%%%%%%%%%

At the end of phase~$j$, using delayed CSIT, the TX reconstructs $i_k[t_{\mathcal{S}}^{(R_N)}], \forall k\in\bar{\mathcal{S}}$, which have power that scales as $P^{1-\alpha}$. The next step depends on the value of $\alpha$.

\begin{itemize}
\item If $\alpha \leq \frac{1}{2}$, the TX uses the quantizer of Lemma~\ref{lemma_quantizer} to obtain
\begin{equation}
\hat{i}_k[t_{\mathcal{S}}^{(R_N)}]=Q_{1-\alpha,\alpha}(i_k[t_{\mathcal{S}}^{(R_N)}])
\label{eq:Kuser_7}
\end{equation}
that comes with a residual quantization noise $n_k[t_{\mathcal{S}}^{(R_N)}]$, which --- by design, and directly from the proof of Lemma~\ref{lemma_quantizer} --- has power scaling in $P^{\alpha}$.
Then the TX quantizes this quantization noise~$n_k[t_{\mathcal{S}}^{(R_N)}]$, to get $\hat{n}_k[t_{\mathcal{S}}^{(R_N)}]$, with $\alpha\log_2(P)$~bits, leaving a residual quantization noise that only scales in $P^0$ (cf.~\cite{Cover2006}).
Finally the TX combines (over the complex numbers) the two quantized estimates, to get a total estimate 
\begin{equation}
\hat{i}_k[t_{\mathcal{S}}^{(R_N)}]\triangleq Q_{1-\alpha,\alpha}(i_k[t_{\mathcal{S}}^{(R_N)}]) + \hat{n}_k[t_{\mathcal{S}}^{(R_N)}].
\label{eq:Kuser_8}
\end{equation}
%We denote by $\hat{n}_k[t_{\mathcal{S}}^{(R_N)}]$ the quantized signal obtained. Upon neglecting the quantization noise at the noise floor, it then holds that
%\begin{equation}
%i_k[t_{\mathcal{S}}^{(R_N)}]=\hat{i}_k[t_{\mathcal{S}}^{(R_N)}]+\hat{n}_k[t_{\mathcal{S}}^{(R_N)}].
%\label{eq:Kuser_8}
%\end{equation}
\item If $\alpha \geq \frac{1}{2}$, the interference term $i_k[t_{\mathcal{S}}^{(R_N)}], \ k\in\bar{\mathcal{S}}$ is quantized using any $(1-\alpha)\log_2(P)$~bit quantizer, to directly give $\hat{i}_k[t_{\mathcal{S}}^{(R_N)}]$ with quantization noise that scales in $P^0$ (cf.\cite{Cover2006}). 
\end{itemize}
%%%%%%%%%%%%%%%%%%%%%%%%%%%%%%%%%%%%%%%%%%%%%%%%%%%%%
\paragraph{Generation of auxiliary data symbols for phase~$j$ of round $N+1$}
%%%%%%%%%%%%%%%%%%%%%%%%%%%%%%%%%%%%%%%%%%%%%%%%%%%%%
The above quantized estimates of the interference will be placed in \emph{auxiliary} data symbols to be transmitted during round~$N+1$. 
Depending on the value of $\alpha$, the auxiliary symbol $a_{k}[t_{\mathcal{S}}^{(R_{N+1})}]$ is loaded as follows
\begin{equation}
\begin{cases}
a_{k}[t_{\mathcal{S}}^{(R_{N+1})}]\longleftarrow\hat{n}_k[t_{\mathcal{S}}^{(R_N)}] &,\alpha\leq \frac{1}{2} \\
a_{k}[t_{\mathcal{S}}^{(R_{N+1})}]\longleftarrow\hat{i}_k[t_{\mathcal{S}}^{(R_N)}] &,\alpha\geq \frac{1}{2}
\end{cases}
\label{eq:Kuser_9}
\end{equation}
where the differentiation into two cases reflects that the rate of the auxiliary data symbol~$a_{k}[t_{\mathcal{S}}^{(R_N)}]$ will be set equal to $\min(1-\alpha,\alpha)\log_2(P)$~bits. We note that for the first round, all auxiliary data symbols are set to zero.
\begin{remark}
During each TS, the number of auxiliary data symbols generated matches the number of auxiliary data symbols transmitted, thus ensuring the proper functioning of the scheme. \qed
\end{remark}
%%%%%%%%%%%%%%%%%%%%%%%%%%%%%%%%%%%%%%%%%%%%%%%%%%%%%
\paragraph{Generation of $\QMAT$ data symbols for phase~$j+1$ of round $N$}
%%%%%%%%%%%%%%%%%%%%%%%%%%%%%%%%%%%%%%%%%%%%%%%%%%%%%
Let us consider an arbitrary set~$\mathcal{P}\subset \mathcal{K}$ with $|\mathcal{P}|=j+1$, and let us denote its elements as
\begin{equation}
\mathcal{P}\triangleq \{p_{1},\ldots,p_{j+1}\}.
\label{eq:Kuser_10}
\end{equation}
The $j$ order-$(j+1)$ data symbols\footnote{See further down Remark~$1$ for a clarification on having to repeat phases to accumulate enough symbols. This is done exactly as in MAT \cite{MaddahAli2012}, and it is transparent to the scheme and the performance here.} $m_{\ell}[t_{\mathcal{P}}^{(R_N)}],\ell\in\{1,\ldots,j\}$ \emph{for phase~$j+1$ of round $N$} are then defined as
\begin{align}
m_{\ell}[t_{\mathcal{P}}^{(R_N)}]&\triangleq \LB  \hat{i}_{p_{\ell}}[t_{\mathcal{P}\setminus {p_{\ell}}}^{(R_N)}] \oplus \hat{i}_{{p_{\ell+1}}}[t_{\mathcal{P}\setminus {p_{\ell+1}}}^{(R_N)}] \RB,\qquad \forall \ell\in\{1,\ldots,j\}.
%\bbm_{k,\ell}[t_{\mathcal{T}}^{(R_N)}]&\triangleq \LB  \hat{i}_{\mathcal{T}(\ell)}[t_{\mathcal{T}\setminus \mathcal{T}(\ell)}^{(R_N)}] \oplus \hat{i}_{\mathcal{T}(\ell+1)}[t_{\mathcal{T}\setminus \mathcal{T}(\ell+1)}^{(R_N)}] ,\hat{n}_{\mathcal{T}(\ell)}[t_{\mathcal{T}\setminus \mathcal{T}(\ell)}^{(R_N)}] \oplus \hat{n}_{\mathcal{T}(\ell+1)}[t_{\mathcal{T}\setminus \mathcal{T}(\ell+1)}^{(R_N)}]  \RB .
\label{eq:Kuser_11}
\end{align}
Each of the above $2j$ components $\{\hat{i}_{p_{\ell}}[t_{\mathcal{P}\setminus {p_{\ell}}}^{(R_N)}], \ \hat{i}_{{p_{\ell+1}}}[t_{\mathcal{P}\setminus {p_{\ell+1}}}^{(R_N)}]\}_{\ell\in\{1,\ldots,j\}}$ has already been received at a RX, as some form of interference. In the next phase, the transmitter will recreate the $j$ different linear combinations $m_{\ell}, \ \ell\in\{1,\ldots,j\}$ and transmit them.
\begin{example}
For $K=3$, and $j=2$, this gives then~$\mathcal{P}=\{1,2,3\}$ and 
\begin{equation}
\begin{aligned}
m_{1}[t_{\{1,2,3\}}^{(R_N)}]&\triangleq (\hat{i}_1[t_{\{2,3\}}^{(R_N)}] &\oplus \hat{i}_2[t_{\{1,3\}}^{(R_N)}])&,\\
m_{2}[t_{\{1,2,3\}}^{(R_N)}]&\triangleq   & (\hat{i}_2[t_{\{1,3\}}^{(R_N)}] &\oplus \hat{i}_3[t_{\{1,2\}}^{(R_N)}]).
\label{eq:3user_encoding_15}
\end{aligned}
\end{equation}
\qed
\end{example}

%%%%%%%%%%%%%%%%%%%%%%%%%%%%%%%%%%%%%%%%%%%%%%%%%%%%%
\subsection{Phase~$K$}
%%%%%%%%%%%%%%%%%%%%%%%%%%%%%%%%%%%%%%%%%%%%%%%%%%%%%
Phase~$K$ is particular as it does not require to retransmit any data symbols. This is due to the fact that it is a broadcasting phase where a fully common message (meant for all users) is transmitted such that $\mathcal{S}=\mathcal{K}$ and $\bar{\mathcal{S}}=\emptyset$. During the single time-slot $t= t_{\mathcal{K}}^{(R_N)}$ of this phase, the transmitted signal takes the form
\begin{equation}
\xv[t]\!= \bv[t] m[t]+\sum_{k=1}^{K}\bv_{k}^{\ZF}[t]s_k[t]
\label{eq:Kuser_12}
\end{equation}
where
\begin{itemize}
\item $m[t] \in \mathbb{C}$ is a $\QMAT$ order-$K$ data symbol, having rate $(1-\alpha)\log_2(P)$~bits and power $\E\LSB|m[t]|^2\RSB\doteq P$, and $\bv[t]\in\mathbb{C}^K$ is a random unit-norm vector.
\item $s_k[t], \ k\in \mathcal{K}$ is a ZF data symbol meant for user~$k$, having rate $\alpha\log_2(P)$~bits and power $\E\LSB|s_k[t]|^2\RSB\doteq P^{\alpha}$.
\end{itemize}
During the TS~$t= t_{\mathcal{K}}^{(R_N)}$, user $k\in\mathcal{K}$ receives
\begin{align}
y_k[t]&=\underbrace{\bh_k^{\He}[t]\bv[t] m[t]}_{\doteq P}+\underbrace{z_k[t]}_{\doteq P^{\alpha}}.
\label{eq:Kuser_13}
\end{align}
%In this phase, there is no auxiliary data symbol.% and no successive phase, such no data symbols are generated.

\begin{remark}
It is important to note that vector $\bbm[t_{\mathcal{S}}^{(R_N)}]\in\mathbb{C}^{K-j}$ ($|\mathcal{S}|=j+1$) contains $K-j$ data symbols. To cover the gap from the fact that we have only generated $j$ order-$(j+1)$ data symbols $m_{\ell}[t_{\mathcal{P}}^{(R_N)}], \ell=1,\ldots,j$ ($|\mathcal{P}|=j$), one must simply repeat phase~$j$ exactly $\frac{K!}{j}$ times, as in the original MAT scheme\cite{MaddahAli2012}. This is automatically accounted for in the DoF calculation, as we will note later.
\qed
\end{remark}
%%%%%%%%%%%%%%%%%%%%%%%%%%%%%%%%%%%%%%%%%%%%%%%%%%%
%%%%%%%%%%%%%%%%%%%%%%%%%%%%%%%%%%%%%%%%%%%%%%%%%%%
%%%%%%%%%%%%%%%%%%%%%%%%%%%%%%%%%%%%%%%%%%%%%%%%%%%
%%%%%%%%%%%%%%%%%%%%%%%%%%%%%%%%%%%%%%%%%%%%%%%%%%%
\section{Scheme for the $K$-user case: Decoding and DoF}\label{se:K_user_decoding}
%%%%%%%%%%%%%%%%%%%%%%%%%%%%%%%%%%%%%%%%%%%%%%%%%%%
%%%%%%%%%%%%%%%%%%%%%%%%%%%%%%%%%%%%%%%%%%%%%%%%%%%
%%%%%%%%%%%%%%%%%%%%%%%%%%%%%%%%%%%%%%%%%%%%%%%%%%%
%%%%%%%%%%%%%%%%%%%%%%%%%%%%%%%%%%%%%%%%%%%%%%%%%%%
%%%%%%%%%%%%%%%%%%%%%%%%%%%%%%%%%%%%%%%%%%%%%%%%%%%

%%%%%%%%%%%%%%%%%%%%%%%%%%%%%%%%%%%%%%%%%%%%%%%%%%%
%%%%%%%%%%%%%%%%%%%%%%%%%%%%%%%%%%%%%%%%%%%%%%%%%%%
\subsection{Decoding: Proof by Induction}\label{se:K_user_decoding:induction}
%%%%%%%%%%%%%%%%%%%%%%%%%%%%%%%%%%%%%%%%%%%%%%%%%%%
%%%%%%%%%%%%%%%%%%%%%%%%%%%%%%%%%%%%%%%%%%%%%%%%%%%

We now consider decoding, and for simplicity assume that all transmissions of all phases and all rounds, have been completed\footnote{It will become clear that the data symbols of round~$N$ can be decoded as soon as round~$N+1$ has ended. Thus, the delay in the decoding of one data symbol does not increase with the number of rounds.}. The proof is done by induction.
Before making the inductive statement, we introduce the following set of decoded symbols
\begin{equation}
\mathcal{A}_{j,k}^{(R_{N})}\triangleq \left\{a_{\ell}[t_{\mathcal{S}}^{(R_N)}]  \ : \ \forall \mathcal{S}\subset\mathcal{K},\forall\ell \in \bar{\mathcal{S}}, |\mathcal{S}|=j,k\in\mathcal{S} \right\}.
\label{eq:Kuser_decoding_1}
\end{equation}
Intuitively, $\mathcal{A}_{j,k}^{(R_{N})}$ is the set of all the auxiliary data symbols generated by a transmission during phase~$j$ of round~$N$ where user~$k$ was among the destined users.

For any round~$N$ and phase~$j<K$, our induction statement is as follows:

If
\begin{itemize}
\item $(i)$ Phase~$j+1$ up to phase~$K$ of round~$N$ have been successfully decoded
\item $(ii)$ User~$k$ has decoded the set $\mathcal{A}_{j,k}^{(R_{N})}$
\end{itemize}
then
\begin{itemize}
\item $(i)$ User~$k$ can decode all its destined $\QMAT$ order-$j$ data symbols and its destined ZF data symbols transmitted during phase~$j$ of round~$N$
\item $(ii)$ User~$k$ can decode the set $\mathcal{A}_{j,k}^{(R_{N+1})}$.
\end{itemize}
After initializing the auxiliary data symbols of the first round to zero, each user $k$ can decode $\mathcal{A}_{j,k}^{(R_{1})}$ as they have a rate equal to zero. Thus we proceed with the inductive step for an arbitrary round~$N$ and phase~$j<K$, where by induction, it holds that all $\QMAT$ order-$(j+1)$ data symbols have been successfully decoded at the corresponding users, which means that user~$k$ has received the data symbols $m_{\ell}[t_{\mathcal{T}}^{(R_N)}],\ell=1,\ldots,j$, for any set $\mathcal{T}$ for which $|\mathcal{T}|=j+1$ and for which $k\in \mathcal{T}$.

In addition, still by induction, user~$k$ has decoded the set~$\mathcal{A}_{j,k}^{(R_{N})}$.

%%%%%%%%%%%%%%%%%%%%%%%%%%%%%%%%%%%%%%%%%%%%%%%%%%%
\paragraph{Decoding of the desired auxiliary data symbols}
%%%%%%%%%%%%%%%%%%%%%%%%%%%%%%%%%%%%%%%%%%%%%%%%%%%
As a first step, user~$k$ uses the signal received during round~$N+1$ to decode $a_k[t_{\mathcal{W}}^{(R_{N+1})}],\forall \mathcal{W}\subset \mathcal{K},|\mathcal{W}|=j,k\in \bar{\mathcal{W}}$. Indeed, it can be seen from~\eqref{eq:Kuser_5} that the scaling of the SINR matches the scaling of the rate. Again we differentiate between the cases $\alpha\geq \frac{1}{2}$ and $\alpha\leq \frac{1}{2}$ as the information contained in $a_k[t_{\mathcal{W}}^{(R_{N+1})}]$ which be different.
\begin{itemize}
\item If $\alpha\leq \frac{1}{2}$, user~$k$ has decoded $\hat{n}_k[t_{\mathcal{W}}^{(R_N)}]$ for every set $\mathcal{W}\subset\mathcal{K}$ for which $k\in \bar{\mathcal{W}}$ and for which $|\mathcal{W}|=j$ (see \eqref{eq:Kuser_9}). To recover the quantized interference $\hat{i}_k[t_{\mathcal{W}}^{(R_N)}]$, it is necessary for user~$k$ to obtain $  Q_{1-\alpha,\alpha}(i_k[t_{\mathcal{W}}^{(R_N)}])$  (see \eqref{eq:Kuser_8}), and this is achieved by quantizing the received signal~$y_k[t_{\mathcal{W}}^{(R_N)}]$ using the quantizer~$Q_{1-\alpha,\alpha}$. Indeed, it follows from Lemma~\ref{lemma_quantizer} that it holds with probability that approaches 1 as the transmit power~$P$ increases, that
\begin{equation}
\begin{aligned}
Q_{1-\alpha,\alpha}\LB y_k[t_{\mathcal{W}}^{(R_N)}]\RB&=Q_{1-\alpha,\alpha}\LB i_k[t_{\mathcal{W}}^{(R_N)}]\RB.
%&=\hat{i}_k[t_{\mathcal{W}}^{(R_N)}].
\end{aligned}
\label{eq:Kuser_decoding_2}
\end{equation}
\item If $\alpha\geq \frac{1}{2}$, user~$k$ has decoded directly from the auxiliary data symbols the quantized interference $\hat{i}_k[t_{\mathcal{W}}^{(R_N)}]$ for every set $\mathcal{W}$ with $|\mathcal{A}|=j$ such that $k\in \bar{\mathcal{W}}$ (see \eqref{eq:Kuser_9}).
\end{itemize}
In both cases, user~$k$ has now obtained~$\hat{i}_k[t_{\mathcal{W}}^{(R_N)}],\forall \mathcal{W}\subset\mathcal{K},|\mathcal{W}|=j,k\in \bar{\mathcal{W}}$.

%%%%%%%%%%%%%%%%%%%%%%%%%%%%%%%%%%%%%%%%%%%%%%%%%%%
\paragraph{Information at user~$k$ at the end of phase~$j+1$}
%%%%%%%%%%%%%%%%%%%%%%%%%%%%%%%%%%%%%%%%%%%%%%%%%%%
We now describe what are the data symbols available at user~$k$ after successfully decoding phase~$j+1$.

Note that for any set $\mathcal{T},|\mathcal{T}|=j+1$ for which $k\in \mathcal{T}$, we can write $\mathcal{T}=\{k,\mathcal{W}_{\mathcal{T},k}\}$ with $|\mathcal{W}_{\mathcal{T},k}|=j$ and $k\in \bar{\mathcal{W}}_{\mathcal{T},k}$. Thus from the decoding of the auxiliary data symbols during phase~$j$ in the previous paragraph, user~$k$ knows $\hat{i}_k[t_{\mathcal{W}_{\mathcal{T},k}}^{(R_N)}]$. Using these interference terms in combination with the order-$(j+1)$ data symbols $m_{\ell}[t_{\mathcal{T}}^{(R_N)}], \ell=1,\ldots, j$, user~$k$ is able to obtain all the quantized interference terms forming the order-$(j+1)$ data symbols, i.e., $\hat{i}_{\ell}[t_{\mathcal{W}_{\mathcal{T},\ell}}^{(R_N)}],\forall \ell \in \mathcal{T}$ with $\ell\neq k$ and $\mathcal{T}=\{\ell,W_{\mathcal{T},\ell}\}$, where again we note that $k\in \mathcal{W}_{\mathcal{T},\ell}$.

Considering all the decoded order-$(j+1)$ data symbols, user~$k$ has then decoded the set~$\mathcal{O}_k^j$ defined as
\begin{equation}
\mathcal{O}_k^j\triangleq\left\{\left\{\hat{i}_{\ell}[t_{\mathcal{T}\setminus \ell}^{(R_N)}]\right\}_{\ell\in\mathcal{T}} \ : \ \forall \mathcal{T}\subset\mathcal{K},|\mathcal{T}|=j+1, k\in\mathcal{T}\right\}.
\label{eq:Kuser_decoding_3}
\end{equation}

%%%%%%%%%%%%%%%%%%%%%%%%%%%%%%%%%%%%%%%%%%%%%%%%%%%
\paragraph{Decoding of the $\QMAT$ data symbols at user~$k$}
%%%%%%%%%%%%%%%%%%%%%%%%%%%%%%%%%%%%%%%%%%%%%%%%%%%
Let us now consider an arbitrary user~$k$ and an arbitrary set $\mathcal{S}\subset\mathcal{K},|\mathcal{S}|=j,k\in \mathcal{S}$. We will show that user~$k$ is able to decode its destined $K-j+1$~$\QMAT$ order-$j$ data symbols in $\bbm[t_{\mathcal{S}}^{(R_N)}]$. For that purpose, user~$k$ needs $K-j+1$ observations with a SINR scaling in $P^{1-\alpha}$. These observations will be:
\begin{align}
&y_k[t_{\mathcal{S}}^{(R_N)}]\\
&i_{\ell}[t_{\mathcal{S}}^{(R_N)}]\qquad,\forall \ell \in \bar{\mathcal{S}} .
\label{eq:Kuser_decoding_4}
\end{align}
Indeed, we can rewrite (up to the noise floor) the above quantized interference terms as $\hat{i}_{\ell}[t_{\mathcal{T}\setminus \ell}^{(R_N)}]$ with $\mathcal{T}=\{\mathcal{S},\ell\}$ and $k\in \mathcal{T}$. Thus, it can be seen from \eqref{eq:Kuser_decoding_4} that all the quantized interference terms~$\hat{i}_{\ell}[t_{\mathcal{S}}^{(R_N)}],\forall \ell \in \bar{\mathcal{S}} $ are part of $\mathcal{O}_k^j$. Hence, user~$k$ is able to obtain all the observations in \eqref{eq:Kuser_decoding_4}. Yet, the $\QMAT$ data symbols are interfered by the auxiliary data symbols $a_{\ell}[t_{\mathcal{S}}^{(R_N)}]$ (see \eqref{eq:Kuser_3} and \eqref{eq:Kuser_5}), but by induction, these data symbols are known at user~$k$, and thus their interference can be removed. Consequently, user~$k$ is able to decode its corresponding $\QMAT$ data symbols in $\bbm[t_{\mathcal{S}}^{(R_N)}]$.

%%%%%%%%%%%%%%%%%%%%%%%%%%%%%%%%%%%%%%%%%%%%%%%%%%%
\paragraph{Decoding of the ZF data symbols at user~$k$}
%%%%%%%%%%%%%%%%%%%%%%%%%%%%%%%%%%%%%%%%%%%%%%%%%%%
With the $\QMAT$ order-$j$ data symbols decoded, it is possible for user~$k$ to decode the ZF data symbols $s_k[t_{\mathcal{S}}^{(R_N)}]$ for all the time slots $t_{\mathcal{S}}^{(R_N)}$ for which $k\in \mathcal{S}$ ($|\mathcal{S}|=j$). When $k\in\bar{\mathcal{S}}$, user~$k$ cannot decode the $\QMAT$ data symbols transmitted, as it does not have sufficiently many observations. However, the user has decoded $\hat{i}_{\ell}[t_{\mathcal{S}}^{(R_N)}]$ for all sets $\mathcal{S}\subset \mathcal{K}$ with $|\mathcal{S}|=j$ and $k\in\bar{\mathcal{S}}$. Therefore it can remove the interference created by the $\QMAT$ data symbols up to the noise floor also for these TS, and then decode its destined ZF data symbol $s_k[t_{\mathcal{S}}^{(R_N)}]$.

%%%%%%%%%%%%%%%%%%%%%%%%%%%%%%%%%%%%%%%%%%%%%%%%%%%
\paragraph{Decoding of the auxiliary data symbols of round~$N+1$}
%%%%%%%%%%%%%%%%%%%%%%%%%%%%%%%%%%%%%%%%%%%%%%%%%%%
In order to conclude the inductive step, it remains to prove that it is possible for user~$k$ to decode the auxiliary data symbols $a_{\ell}[t_{\mathcal{S}}^{(R_{N+1})}],\forall\ell \in \bar{\mathcal{S}},|\mathcal{S}|=j,k\in \mathcal{S}$. This follows directly from the definition of the auxiliary data symbols in \eqref{eq:Kuser_9}. Indeed, these auxiliary data symbols at round~$N+1$ are a function of the auxiliary data symbols at round~$N$ (i.e., $a_{\ell}[t_{\mathcal{S}}^{(R_{N})}]$ for $\ell \in \bar{\mathcal{S}}$, which are assumed to be known by induction), and the $\QMAT$ data symbols in $\bbm[t_{\mathcal{S}}^{(R_N)}]$ which have been decoded at user~$k$. Therefore, user~$k$ is able to decode these auxiliary data symbols. This concludes the inductive step.

%%%%%%%%%%%%%%%%%%%%%%%%%%%%%%%%%%%%%%%%%%%%%%%%%%%
%%%%%%%%%%%%%%%%%%%%%%%%%%%%%%%%%%%%%%%%%%%%%%%%%%%
\subsection{Calculation of the DoF}\label{se:K_user_decoding:induction}
%%%%%%%%%%%%%%%%%%%%%%%%%%%%%%%%%%%%%%%%%%%%%%%%%%%
%%%%%%%%%%%%%%%%%%%%%%%%%%%%%%%%%%%%%%%%%%%%%%%%%%%
We first note that the last round is dedicated to transmitting only auxiliary data symbols, and that --- if the number of rounds is sufficiently large --- this DoF loss is negligible. Thus focusing on just one round, we note that we have followed closely the structure in different phases of the MAT scheme, so to calculate our DoF we first scale the MAT DoF by a factor of $(1-\alpha)$ to account for our reduced rate of the $\QMAT$ data symbols, and then note that during each TS, we additionally send $\alpha \log_2(P)$~bits to each user. Adding these two parts together, provides immediately the sum-DoF expression in Theorem~\ref{thm:mainTheorem}, and concludes the proof.

\section{Conclusion}
The work has provided the first ever communication scheme which, for the general $K$-user MISO BC, manages to simultaneously and optimally exploit both delayed and imperfect-quality current CSIT. This is achieved by providing a new way of jointly incorporating MAT-type alignment (based on delayed CSIT), and ZF-type separation (using imperfect-quality current CSIT). In addition to resolving a theoretical open problem, the $\QMAT$ scheme is designed to adapt to CSI considerations that span both timeliness and precision. Interestingly, the $\QMAT$ scheme can also be seen as a robust precoding scheme which exploits the knowledge of the past to make the transmission less dependent on the current channel state. Indeed, the DoF achieved with the $\QMAT$ scheme degrades more slowly than schemes from the literature as the CSIT quality coefficient $\alpha$ decreases. Investigating how to leverage the new ideas presented in this work to develop practical robust transmission schemes is an interesting research problem currently under investigation within our group.

\appendix
The quantizer $Q_{\beta_1,\beta_2}$ quantizes separately the imaginary part and the real part using half the bits for each dimension. Therefore, we consider in the following the quantization of a real signal~$y$. Let us start by introducing the short-hand notations:
\begin{equation}
\begin{aligned}
y_P&\triangleq \sqrt{P^{\beta_1}}y,\\
n_P&\triangleq \sqrt{P^{\beta_2}}n
\label{eq:appendix_1}
\end{aligned}
\end{equation}
and denote the points after applying the quantizer as
\begin{equation}
\begin{aligned}
\hat{y}_P&\triangleq Q_{\beta_1,\beta_2}(y_P)\\
\hat{y}^n_P&\triangleq Q_{\beta_1,\beta_2}(y_P+n_P).
\end{aligned}
\label{eq:appendix_2}
\end{equation}
We will now show how to design the quantizer~$Q_{\beta_1,\beta_2}$ such that
\begin{equation}
\lim_{P\rightarrow \infty}\Pr\{\hat{y}_P = \hat{y}^n_P\} = 1
\label{eq:appendix_3}
\end{equation}
and 
\begin{equation}
\E\LSB |\hat{y}_P-y_P|^2\RSB\dotleq P^{\beta_2}.
\label{eq:appendix_4}
\end{equation}
The quantizer $Q_{\beta_1,\beta_2}$ uses in fact only $\frac{(\beta_1-\beta_2)}{2}\log_2(P)-\frac{1}{2}\log_2(\log_2(P))$~bits. This unusual fact that a small fraction of the available bits are not used is a consequence of \eqref{eq:appendix_3}: The sensitivity of the quantization to an additive noise has to be controlled.

Building upon rate distortion theory\cite{Cover2006}, we compute the Mean Square Error (MSE)-minimizing optimal codebook with the rate $\frac{(\beta_1-\beta_2)}{2}\log_2(P)-\frac{1}{2}\log_2(\log_2(P))$~bits, which we denote by $\mathcal{C}^{\star}$. It is well known that the Gaussian source is the hardest to compress\cite{Cover2006}, such that we can upperbound the distortion achieved, denoted by~$D^{\star}$, by the distortion in the Gaussian case to obtain\cite{Cover2006}
\begin{align}
D^{\star}&\leq  P^{\beta_1} 2^{-2\LB\frac{(\beta_1-\beta_2)}{2}\log_2(P)-\frac{1}{2}\log_2(\log_2(P))\RB}\\
&=  \log_2(P)P^{\beta_2}\\
&\doteq P^{\beta_2}.
\end{align}
Using this quantizer allows then to satisfy \eqref{eq:appendix_4}. We will now modify this quantizer in order to also satisfy \eqref{eq:appendix_3}. Specifically, we modify~$\mathcal{C}^*$ such that the minimal distance between any two points is at least equal to~$\sqrt{\log_2(P)P^{\beta_2}}$. This is done as follows: If two points are closer than $\sqrt{\log_2(P)P^{\beta_2}}$, then we remove one of the two points of the codebook. Trivially, this does not increase the scaling of the MSE, such that \eqref{eq:appendix_4} remains valid. We denote this modified codebook by $\mathcal{C}_{\beta_1,\beta_2}$. The quantizer~$Q_{\beta_1,\beta_2}$ then maps the signal~$y$ to the nearest point inside $\mathcal{C}_{\beta_1,\beta_2}$ and we denote the set of the real values delimiting the quantization cells by~$\mathcal{B}_{\beta_1,\beta_2}$. 

We have now finalized the design of the quantizer and it remains solely to show that \eqref{eq:appendix_3} is indeed satisfied, To prove that result, we start by introducing a set~$\mathcal{A}$ containing all the points which are within~$\sqrt{\log(\log(P)) P^{\beta_2}}$ of the boundary of the quantization cell:
\begin{equation}
\mathcal{A} \triangleq \left \{x\in\mathbb{R}\big| \min_{q\in \mathcal{B}_{\beta_1,\beta_2}} |x-q| \leq \sqrt{\log(\log(P)) P^{\beta_2}}\right\}.
\end{equation} 
To gain insights, the sets~$\mathcal{C}_{\beta_1,\beta_2}$, $\mathcal{B}_{\beta_1,\beta_2}$, and $\mathcal{A}$ are illustrated in a toy-example in Fig.~\ref{ITW2016_journal_proof_lemma}. 
\begin{figure}
\centering
\includegraphics[width=0.7\columnwidth]{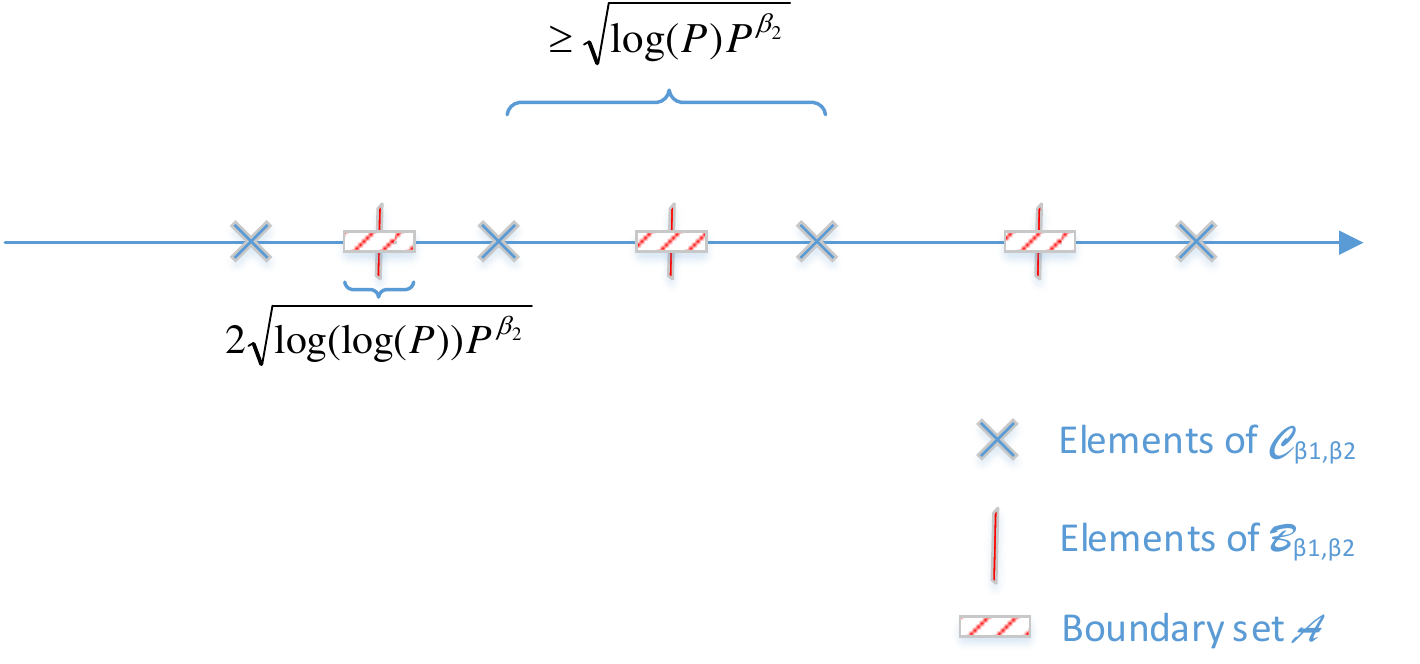}
\caption{Illustration of the quantizer design.} 
\label{ITW2016_journal_proof_lemma}
\end{figure}

We can then bound the probability that the additive noise~$n_P$ leads to a change of quantization point, i.e., $\hat{y}_P\neq \hat{y}^n_P$, as
		\begin{align}
			\Pr\left \{\hat{y}_P\neq \hat{y}^n_P \right\} &\!=\!\Pr\left\{\hat{y}_P\neq\hat{y}^n_P|y_P\in \mathcal{A}\right \}  \Pr\left\{ y_P\in\mathcal{A}\right\}\! +\! \Pr\left\{ \hat{y}_P\neq\hat{y}^n_P|y_P \notin \mathcal{A}\right \}\! \Pr\left \{y_P\notin \mathcal{A}\right\}\\
			&\leq \underbrace{\Pr\left\{ y_P\in\mathcal{A}\right\}}_{\triangleq P_1} + \underbrace{\Pr\left\{ \hat{y}_P\neq\hat{y}^n_P|y_P \notin \mathcal{A}\right \}}_{\triangleq P_2}
		\label{eq:probability_bound}
			%\\
				%&\overset{(a)}{\leq} \ Pr(y\in\mathcal{A}) + Pr(\hat{y}\neq\hat{y}_n|y \notin \mathcal{A})\cdot Pr(y\notin \mathcal{A}) + Pr(|y+n|^2 > L_1 P)  \\
				%&\overset{(b)}{\leq} \ Pr(y\in\mathcal{A}) + Pr(|n|^2 > L_2 P^{\alpha_2}) + Pr(|y+n|^2 > L_1 P)  \\
				%&\leq \ \epsilon_1+\epsilon_2+\epsilon_3
		\end{align} 
where we have denoted by~$P_1$ the first term of \eqref{eq:probability_bound} and by $P_2$ the second one. Focusing first on $P_2$, we can upperbound it as
\begin{align}
P_2&\leq \Pr\left\{|n_P|> \sqrt{\log(\log(P)) P^{\beta_2}}\right\}\\
&\stackrel{(a)}{\leq} \frac{P^{\beta_2}}{\log(\log(P))P^{\beta_2}}\\ 
&= \frac{1}{\log(\log(P))} 
\end{align}
where inequality~$(a)$ follows from Chebyshev's inequality.

Turning to $P_1$, we can write that
\begin{align}
P_1&= \Pr\left\{y_P\in \mathcal{A}\right\}\\
&= \Pr\left\{y\in \frac{\mathcal{A}}{\sqrt{P^{\beta_1}}}\right\}\\
&= \int_{\frac{\mathcal{A}}{\sqrt{P^{\beta_1}}}} P_{Y}(y)\mathrm{d}y \\
&\leq \sup_{y\in \mathbb{R}}p_Y(y) \int_{\frac{\mathcal{A}}{\sqrt{P^{\beta_1}}}}\mathrm{d}y.
\end{align} 
We can then upperbound the support of the set~$\mathcal{A}$ as its restriction to each quantization cell has the length of~$2\sqrt{\frac{\log_2(\log_2(P))P^{\beta_2}}{P^{\beta_1}}}$, for each of the $\sqrt{\frac{P^{\beta_1-\beta_2}}{\log_2(P)}}$ quantization cells. This gives
\begin{align}
P_1&\leq \sup_{y\in \mathbb{R}}p_Y(y) 2\sqrt{\frac{\log_2(\log_2(P))P^{\beta_2}}{P^{\beta_1}}}\sqrt{\frac{P^{\beta_1-\beta_2}}{\log_2(P)}}\\
&= \sup_{y\in \mathbb{R}}p_Y(y)  2\sqrt{\frac{\log_2(\log_2(P))}{\log_2(P)}}.
\end{align} 
Letting $P$ grow to infinity, both $P_1$ and $P_2$ tend to zero, which concludes the proof of the lemma.

\bibliographystyle{IEEEtran}
\bibliography{Literature}
%%%%%%%\bibliography{Literature}
\end{document}